\titleformat*{\section}{\large\bfseries}
\titleformat*{\subsection}{\normalsize\bfseries}
\newenvironment{proof}{\noindent{\em \textbf{Proof.}}}{\quad \hfill$\Box$\vspace{2ex}}
\newtheorem{theorem}{Theorem}[section]
\newtheorem{remark}[theorem]{Remark}
\newtheorem{lemma}[theorem]{Lemma}
\numberwithin{equation}{section}
\def \T {\mathbb{T}}
\def\qi {\bm{i}}
\def\sgn{\mbox{sgn}}
\def \F {\mathbf{F}}
\def \U {\mathbf{U}}
\newcommand{\norm}[1]{\left\lVert#1\right\rVert}
\newcommand{\abs}[1]{\left|#1\right|}
\title{\Large \textbf{Signal reconstruction from noisy multichannel samples}}
\author[a,b]{\small  Dong Cheng\thanks{chengdong720@163.com}}
\author[c]{\small  Xiaoxiao Hu\thanks{huxiaoxiao@wmu.edu.cn}}
\author[d]{\small Kit Ian Kou\thanks{kikou@umac.mo}}
\affil[a]{\small{Research Center for Mathematics and Mathematics Education, Beijing Normal University at Zhuhai, Zhuhai 519087, China}}
\affil[b]{\small{Laboratory of Mathematics and Complex Systems (Ministry of Education), School of Mathematical Sciences, Beijing Normal University, Beijing  100875, China}}
\affil[c]{\small{The First Affiliated Hospital of Wenzhou Medical University, Wenzhou Medical University, Wenzhou, Zhejiang, China}}
\affil[d]{\small{Department of Mathematics, Faculty of Science and Technology, University of Macau, Macao, China}}
\date{}
\begin{document}
  \maketitle
\begin{abstract}
\normalsize
We consider the signal reconstruction problem under the case of the signals  sampled in the multichannel way and with the presence of noise. Observing that if  the samples are inexact,  the rigorous enforcement of  multichannel interpolation    is inappropriate. Thus the reasonable  smoothing and regularized corrections are indispensable. In this paper,  we propose  several alternative methods for the signal reconstruction from the noisy multichannel samples under different  smoothing and regularization principles. We compare these  signal reconstruction  methods theoretically and experimentally in the various situations. To  demonstrate the effectiveness of the proposed methods, the probability interpretation and  the error analysis for these methods are  provided. Additionally, the numerical simulations as well as some guidelines to use the methods  are also presented.

\end{abstract}

 \begin{keywords}
Signal reconstruction, multichannel samples, denoising, error analysis, \\smoothing,  regularization.
\end{keywords}

\begin{msc}
 39A12,  11R52,	41A05, 12E05.
\end{msc}

\section{Introduction}\label{S1}

The main specialty of the multichannel sampling \cite{papoulis1977generalized,Grochenig2020sharp} is that the samples are taken from multiple transformed versions of the function. The transformation can be  the derivative, the Hilbert transform, or more general   liner   time invariant  system \cite{monich2017two}. The classical multichannel sampling theorem \cite{papoulis1977generalized} is only available for the bandlimited functions in the sense of Fourier transform and it  has been generalized for the bandlimited functions in the sense of   fractional Fourier transform (FrFT) \cite{liu2017signal},
linear canonical transform (LCT) \cite{xu2017multichannel,Shah2021Lattice} and offset LCT \cite{wei2019convolution}. In a real application, only finitely many samples, albeit with large amount, are given in a bounded region \cite{cheng2020Multicnon}. That is, the underlying signal is time-limited. Thus, reconstruction by the sampling formulas for the bandlimited functions is inappropriate because the bandlimited  functions cannot be time-limited by the uncertainty principle \cite{wigderson2021uncertainty}.  A time-limited function can be viewed as a  period of a  periodic function. Certain studies have been given to the sampling theorems for the periodic bandlimited functions \cite{xiao2013sampling,Mohammadi2018sampling}. Moreover,    the multichannel sampling approach has been extended to the time-limited functions \cite{cheng2019FFTMCI}.

Let $\mathbb{T}:=[0,2\pi)$ be the unit circle and denote by $L^p(\T),~1\leq q<\infty$, the totality of functions $f(t)$ such that
\begin{equation*}
\norm{f}_p:= \left(\frac{1}{2\pi} \int_{\T}|f(t)|^pdt\right)^{\frac{1}{p}}<\infty.
\end{equation*}
Let $M\in \mathbb{Z}^+$,  $f,h_m \in L^2(\T)$, and  define
\begin{equation*}
g_m(t)    = (f * h_m)(t) =  \frac{1}{2\pi} \int_{\T}f(s)h_m(t-s)ds,
\end{equation*}
for $1\leq m \leq M$. It was shown in  \cite{cheng2019FFTMCI}  that there exist  $y_1(t)$, $y_2(t)$, $\cdots$, $y_M(t)$ such that
\begin{equation}\label{MCI}
 \mathcal{T}_\mathbf{N} f(t) : = \frac{1}{L}\sum_{m=1}^{M} \sum_{p=0}^{L-1}g_m(\tfrac{2\pi p}{L})y_m(t-\tfrac{2\pi p}{L})
\end{equation}
 satisfies the following interpolation consistency:
\begin{equation}\label{consistency}
\left(\mathcal{T}_\mathbf{N}f*h_m\right)(\tfrac{2\pi p}{L}) = \left( f * h_m \right) (\tfrac{2\pi p}{L}),\quad  0\leq p\leq L-1,\quad 1\leq m\leq M.
\end{equation}
Here, $g_m(t)$ is a filtered function with the input $f(t)$ and the impulse response $h_m(t)$, and  $y_1(t)$, $y_2(t)$, $\cdots$, $y_M(t)$ are determined by   $h_1(t)$, $h_2(t)$, $\cdots$, $h_M(t)$.  The continuous function $\mathcal{T}_\mathbf{N} f(t) $ is called a multichannel interpolation (MCI) for $f$. The MCI reveals that one can reconstruct  a time-limited function $f$ by using  multiple types of samples simultaneously. If $f$ is periodic bandlimited,   it can be perfectly recovered by (\ref{MCI}).

It is noted that to find a function satisfying the interpolation consistency (\ref{consistency}) is to solve a system of $N_s=LM$ equations. And  the matrix involved in  this inverse problem  may have  a large condition number if  the  sample sets $\{g_m(\frac{2\pi p}{L}), 0\leq p\leq L-1\}$,    $ 1\leq m\leq M$  have a high   degree of relevance. In spite of this, in \cite{cheng2020Multicnon,cheng2019FFTMCI}, the authors showed  that the large scale ($N_s$) inverse problem could be converted  to a simple inversion   problem  of small matrices ($M\times M$) by    partitioning the frequency band into small pieces. Moreover,
the closed-form  of the MCI formula as well as   the FFT-based   implementation algorithm  (see Algorithm \ref{alg1})  were provided.

The MCI   guarantees that a  signal
can be well reconstructed from its clean  multichannel samples, little
has been said about the case where the samples are noisy.
It is of great significance to examine the errors that arise in the signal reconstruction by (\ref{MCI}) in the presence of noise.
In this paper, we consider the   reconstruction problem under the situation that a signal $f(t)$ is sampled in    a multichannel way and the samples are corrupted by the additive noise, i.e., we will use the noisy samples
\begin{equation}\label{noisydata}
   s_{m,p}=g_m(\tfrac{2\pi p}{L}) + \epsilon_{m,p},\quad  0\leq p\leq L-1,\quad 1\leq m\leq M,
\end{equation}
  to reconstruct $f(t)$. Here, $\{\epsilon_{m,p}\}$ is an i.i.d. noise process
with $\mathbb{E}[ \epsilon_{m,p} ] = 0$, $\mathrm{Var}[\epsilon_{m,p}] =  \sigma^2_{\epsilon}$.

The interpolation of noisy data introduces the undesirable error  in the reconstructed signal. There is a need to estimate the error of the MCI for the observations defined by (\ref{noisydata}). An accurate error estimate of the MCI in the presence of noise  helps to design suitable reconstruction formulas from noisy multichannel samples.  Note that the MCI applies to various kinds of sampling  schemes, thus the error analysis  can also be used to analyze what kinds of sampling schemes have a good performance in signal reconstruction  in the noisy environment. In the current paper, we provide an error estimate for the MCI from noisy multichannel samples, and express the error as a function of the sampling rate as well as the parameters associated with sampling schemes. In addition, we will show  how  sampling rate and   sampling  schemes affect the reconstruction error caused by noise.

Based on the error estimate of the MCI in the noisy environment, we will provide a class of signal reconstruction methods by introducing some reasonable smoothing and regularized corrections to the MCI such that the reconstructed signal  could be robust to noise. In other words, the reconstruction should not be affected much by small changes in the data. Besides, we need to make sure that the reconstructed signal will be convergent to the original signal as the sampling rate tends to infinity.

If $f(t)$ is a periodic bandlimited signal, only the error caused by noise needs to be considered. Otherwise,   the aliasing error should   be  taken into account as well. It is noted that the smoothing and regularization operations will restrain high frequency in general. It follows that to reduce the noise error   by the methods based on  smoothing or regularization may increase the aliasing error. Thus it is necessary to make a trade-off between the noise error and the aliasing error such that the reconstructed signal can be convergent to $f$ in the non-bandlimited case as the sampling rate tends to infinity.

The objective of this paper is to study the aforementioned problems that arise in the signal reconstruction from noisy multichannel data. The main contributions  are  summarized as follows.
\begin{enumerate}
  \item    The error estimate of the signal reconstruction by the MCI from noisy samples is given.
  \item  We propose four methods, i.e.,   post-filtering,   pre-filtering, $l_1$ regularization and $l_2$ regularization, to reduce the error caused by noise in the multichannel reconstruction. The parameters of post-filtering and pre-filtering are optimal in the sense of the expectation of mean square error (EMSE).
  \item  The convergence property of post-filtering is verified  theoretically and experimentally.  The numerical simulations as well as some guidelines to use the proposed signal reconstruction methods are also provided.
\end{enumerate}

The rest of the paper is organized as follows. Section  \ref{S2}
briefly  reviews the multichannel interpolation (MCI) and its FFT-based fast algorithm. The error estimate for the MCI of noisy samples is provided.  In Section  \ref{S3}, the techniques of post-filtering, pre-filtering and regularized approximation are applied
to reconstruct $f$ from its noisy multichannel samples. The comparative experiments for the different methods are conducted in Section \ref{S4}.  Finally,  conclusion and discussion are drawn at the end of the paper.

\section{Error analysis of the MCI from noisy samples}\label{S2}

\subsection{The MCI and its fast implementation algorithm}
We begin by reviewing the MCI in more detail. Let $N_1,N_2\in\mathbb{Z}$,  and $I^{\mathbf{N}}=\{n: N_1\leq n \leq N_2\}$, we denote by  $B_{\mathbf{N}}$ the totality of the periodic bandlimited functions (trigonometric polynomials) with the following form:
\begin{equation*}%\label{bandsignal}
f(t)=\sum_{n\in I^{\mathbf{N}}}a(n)e^{\qi nt}  ,~~~I^{\mathbf{N}}=\{n: N_1\leq n \leq N_2\}.
\end{equation*}
The bandwidth of $f$ is defined by   the  cardinality of $I^{\mathbf{N}}$, denoted by $\mu(I^{\mathbf{N}})$.
The set $I^{\mathbf{N}}$ can be expressed  as
$ I^{\mathbf{N}}=\bigcup_{j=1}^M I_j$, where
\begin{equation*}
I_j=\{n: N_1+(j-1)L\leq n\leq N_1+jL-1\}.
\end{equation*}

 We use the Fourier coefficients of $h_m$ to define the $M\times M$ matrix $$\mathbf{H}_n =\left[ b_m(n+jL-L)\right]_{jm}.$$  Suppose that $\mathbf{H}_n$ is invertible for every  $n\in I_1$ and denote its inverse matrix as
\begin{equation*}
\mathbf{H}_n^{-1}= \begin{bmatrix}
q_{11} (n) & q_{12} (n)&\cdots &q_{1M}(n)\\
q_{21}(n) & q_{22}(n) &\cdots &q_{2M} (n)\\
\vdots&\vdots& ~ & \vdots \\
q_{M1} (n)& q_{M2}(n) &\cdots &q_{MM}(n)
\end{bmatrix}.
\end{equation*}
Then the interpolating function $y_m$ in (\ref{MCI}) is given by
\begin{equation*}%\label{ym}
y_m(t)= \sum_{n\in I^{\mathbf{N}}}r_{m}(n)e^{\qi n t},   \quad 1\leq m \leq M,
\end{equation*}
where
\begin{equation*}%\label{defrm}
r_{m}(n)= \begin{cases}
q_{mj}(n+L-jL),&  \text{if}~n\in I_j, ~j=1,2,\cdots,M ,  \\
0 & \text{if}~ n\notin I^{\mathbf{N}}.
\end{cases}
\end{equation*}

It was shown in \cite{cheng2019FFTMCI} that if $f$ is not bandlimited, the aliasing error of the MCI is given by
\begin{equation*}
\sum_{n\notin I^{\mathbf{N}}} \abs{a(n)}^2+ \sum_{k\notin\{1,2,\dots,M\}}\sum_{n\in  I_k}\abs{a(n)}^2 \sum_{l=1}^{M}\abs{\sum_{m=1}^M r_m(n+(l-k)L)b_m(n)}^2.
\end{equation*}
Moreover, the MCI  can be  implemented by a FFT-based algorithm (see Algorithm \ref{alg1}) and the well-known FFT interpolation \cite{fraser1989interpolation} is a special case of the MCI.

\IncMargin{1em} % 使得行号不向外突出
\begin{algorithm}[htb]

    \SetAlgoNoLine % 不要算法中的竖线
    \SetKwInOut{Input}{\textbf{Input}}\SetKwInOut{Output}{\textbf{Output}} % 替换关键词

    \Input{
    \\
    1. The multichannel samples $\mathbf{G}=[\mathbf{g}_1,\mathbf{g}_2,\dots,\mathbf{g}_M]$, where $\mathbf{g}_m$ is a  vector  \\~~~  consisting   of $L$ samples of $g_m$\;\\
    2.  The location of lower bound for the frequency band: $N_1$ \;\\
     3.   The number of  function values of  $\mathcal{T}_\mathbf{N} f(t)$:  $N_o$.\ \\}
    \Output{
        \\
         1.  The vector $\mathbf{f}_o$ consisting   of $N_o$ function values of $\mathcal{T}_\mathbf{N} f(t)$. \medskip \\}
    \BlankLine

    Multiply $k$-th row of $\mathbf{G}$ by $e ^{   \frac{-2\pi\qi N_1 (k-1)}{L} }$, obtain ${\mathbf{G}_e}$ \;
    Take FFT of ${\mathbf{G}_e}$ (for each column), obtain $\widetilde{\mathbf{G}}$\;
    Compute $\mathbf{A}_{L\times M}$, where $\mathbf{A}(k,:)=\widetilde{\mathbf{G}}(k,:)\mathbf{H}_{N_1+k-1}^{-1}$ \;
    Flatten $\mathbf{A}$  w.r.t. column, obtain $a$ ($N_s$ length vector) \;
    Zero padding: add $N_o-N_s$ zeros at the end of $\mathbf{a}$, obtain $\mathbf{a}_z$.\;
    Compute IFFT for $\mathbf{a}_z$, obtain $ {\mathbf{f}_e}$\;
    Multiply $k$-th element of ${\mathbf{f}_e}$ by $\frac{N_o}{L} e^{  \frac{2\pi\qi N_1 (k-1)}{N_o} } $,  obtain $\mathbf{f}_o$.
    \caption{FFT-based algorithm for MCI with complexity of $\mathcal{O}(N_o\log N_o)$.}\label{alg1}
\end{algorithm}
\DecMargin{1em}

\subsection{The error estimate for the MCI of noisy samples}

Given the noisy data (\ref{noisydata}), we define
\begin{equation*}
  f_{\mathbf{N},\epsilon}(t):= \frac{1}{L}\sum_{m=1}^{M} \sum_{p=0}^{L-1}\left(g_m(\tfrac{2\pi p}{L})+\epsilon_{m,p}\right)y_m(t-\tfrac{2\pi p}{L}).
\end{equation*}
If $f\in B_{\mathbf{N}}$, then
\begin{align*}
&\mathbb{E} \left(\frac{1}{2\pi} \int_{0}^{2\pi} \abs{ f_{\mathbf{N},\epsilon}(t)-f(t)}^2 dt\right) \\
=  &   \mathbb{E}\left( \frac{1}{2\pi}\int_{0}^{2\pi} \abs{  \frac{1}{L}\sum_{m=1}^{M} \sum_{p=0}^{L-1} \epsilon_{m,p} y_m(t-\tfrac{2\pi p}{L})}^2 dt\right)  \\
   = & \frac{1}{L^2}\mathbb{E} \sum_{m=1}^{M} \sum_{p=0}^{L-1}\sum_{m'=1}^{M} \sum_{p'=0}^{L-1} \epsilon_{m,p} \epsilon_{m',p'}\frac{1}{2\pi}\int_{0}^{2\pi}   y_m(t-\tfrac{2\pi p}{L})  \overline{y_{m'}(t-\tfrac{2\pi p'}{L}}) dt\\
   =& \frac{1}{L^2} \sigma_\epsilon^2 \sum_{m=1}^{M} \sum_{p=0}^{L-1} \left(\frac{1}{2\pi}\int_{0}^{2\pi}   \abs{y_m(t-\tfrac{2\pi p}{L}) }^2 dt\right)\\
   =& \frac{\sigma_\epsilon^2 }{L} \sum_{m=1}^{M}   \norm{y_m}_2^2= \frac{\sigma_\epsilon^2 }{L} \sum_{m=1}^{M} \sum_{n\in I^{\mathbf{N}}}\abs{r_{m}(n)}^2.
\end{align*}

Suppose  that $X,Y$ are independent random variables with the same normal distribution
$\mathcal{N}(0,\sigma^2)$, it is easy to verify that $\mathrm{Var}(X^2)=2\sigma^4$, $\mathrm{Var}(XY) = \sigma^4$.
Let
\begin{equation*}
 z(m,m',p,p'):= \frac{1}{2\pi}\int_{0}^{2\pi}   y_m(t-\tfrac{2\pi p}{L})  \overline{y_{m'}(t-\tfrac{2\pi p'}{L}}) dt
\end{equation*}
From H{\"o}lder inequality, we have that
\begin{equation*}
  \abs{z(m,m',p,p')}^2\leq \norm{y_m}_2^2\norm{y_{m'}}_2^2.
\end{equation*}
It follows that
\begin{align*}
&\mathrm{Var} \left(\frac{1}{2\pi} \int_{0}^{2\pi} \abs{ f_{\mathbf{N},\epsilon}(t)-f(t)}^2 dt\right) \\
=  &  \mathrm{Var}\left( \frac{1}{2\pi}\int_{0}^{2\pi} \abs{  \frac{1}{L}\sum_{m=1}^{M} \sum_{p=0}^{L-1} \epsilon_{m,p} y_m(t-\tfrac{2\pi p}{L})}^2 dt\right)  \\
   \leq & \frac{1}{L^4} \mathrm{Var} \sum_{m=1}^{M} \sum_{p=0}^{L-1}\sum_{m'=1}^{M} \sum_{p'=0}^{L-1} \epsilon_{m,p} \epsilon_{m',p'}\abs{z(m,m',p,p')}\\
   =& \frac{1}{L^4} \sum_{m=1}^{M} \sum_{p=0}^{L-1}\sum_{m'=1}^{M} \sum_{p'=0}^{L-1} \abs{z(m,m',p,p')}^2 \mathrm{Var} \left( \epsilon_{m,p} \epsilon_{m',p'}\right)\\
   \leq  &  \frac{2 \sigma_\epsilon^4 }{L^4} \sum_{m=1}^{M} \sum_{p=0}^{L-1}\sum_{m'=1}^{M} \sum_{p'=0}^{L-1} \abs{z(m,m',p,p')}^2\\
   \leq  &  \frac{2 \sigma_\epsilon^4 }{L^2} \sum_{m=1}^{M}  \sum_{m'=1}^{M}   \norm{y_m}_2^2\norm{y_{m'}}_2^2\\
   =  &  \frac{2 \sigma_\epsilon^4 }{L^2} \left(\sum_{m=1}^{M} \norm{y_m}_2^2\right)^2
\end{align*}
Therefore the variance of   mean square error is bounded and is not larger than  twice the square of  the expectation.

In order to show the mean square error of MCI caused by noise  more clearly, we consider three concrete sampling schemes, namely, the reconstruction  problem of $f$ from (1)  the samples of $f$ (single-channel); (2) the  samples of $f$ and $\mathcal{H}f $ (two-channel); (3) the samples of $f$ and $f'$ (two-channel).  For simplicity, we abbreviate the MCI of the above types of samples   as F1, FH2 and FD2 respectively and denote by $N_s=LM$ the total  number of samples. For F1, we have that $M=1$, $N_s = LM=L$. It easy to see that
\begin{equation*}
  r(n,\mathrm{F1},N_s) =1 ~~\text{for}~~-\frac{N_s}{2}+1\leq n \leq \frac{N_s}{2}.
\end{equation*}
For FH2, we have that $M=2$, $N_s =2L$. Since
\begin{equation*}
	 \mathbf{H}_n= \begin{bmatrix}
	1 & - \qi \sgn (n)   \\
	1 & - \qi \sgn(n+L)
	\end{bmatrix}.
	\end{equation*}
	It is clear that
	\begin{equation*}
	 \mathbf{H}_n^{-1}= \begin{bmatrix}
	\frac{1}{2} &\frac{1}{2}  \\
	- \frac{\qi}{2} & \frac{\qi}{2}
	\end{bmatrix} ~~\text{for}~~-L+1\leq n \leq -1,~~ \mathbf{H}_0^{-1}= \begin{bmatrix}
	1 & 0 \\
	- \qi  &1
	\end{bmatrix}.
\end{equation*}
It follows that
\begin{equation*}%\label{defrm}
r_{1}(n,\mathrm{FH2},N_s)= \begin{cases}
 \frac{1}{2},&  \text{if}~~ 1\leq \abs{n} \leq L-1 ,    \\
0 & \text{if}~~n=L,\\
1 & \text{if}~~n=0.
\end{cases}
\end{equation*}
\begin{equation*}%\label{defrm}
r_{2}(n,\mathrm{FH2},N_s)= \begin{cases}
 -\frac{ \qi}{2},&  \text{if}~~ -L+1\leq n \leq -1,    \\
\frac{ \qi}{2}  & \text{if}~~ 1\leq n \leq L-1,\\
-\qi & \text{if}~~n=0,\\
1& \text{if}~~n=L.
\end{cases}
\end{equation*}
For FD2, by direct computations, we have that
\begin{equation*}
	 \mathbf{H}_n= \begin{bmatrix}
 1 & \qi n \\
 1 & \qi (L+n)
	\end{bmatrix},\quad
	 \mathbf{H}_n^{-1}= \begin{bmatrix}
		 \frac{L+n}{L} & -\frac{n}{L} \\
 \frac{\qi}{L} & -\frac{\qi}{L}
	\end{bmatrix} .
\end{equation*}
It follows that
\begin{equation*}%\label{defrm}
r_{1}(n,\mathrm{FD2},N_s)= \begin{cases}
 1+\frac{n}{L},&  \text{if}~~ -L+1\leq n \leq 0,      \\
1 -\frac{n}{L} & \text{if}~~  1\leq n \leq L.
\end{cases}
\end{equation*}
\begin{equation*}%\label{defrm}
r_{2}(n,\mathrm{FD2},N_s)= \begin{cases}
 \frac{ \qi}{L},&  \text{if}~~ -L+1\leq n \leq 0,    \\
-\frac{ \qi}{L}  & \text{if}~~ 1\leq n \leq L.
\end{cases}
\end{equation*}

To study FH2 and FD2, we assume that $N_s$ is an even number and $I^{\mathbf{N}}  = \{n: -\frac{N_s}{2}+1\leq n \leq \frac{N_s}{2}\}$. It should be noted that $L= {N_s} $ for F1 because it is a single-channel interpolation. In contrast, $L=\frac{N_s}{2}$  for FH2 and FD2 as they are  two-channel interpolations. Thus, to compare the performance of the three interpolation methods  under the same total number of samples $N_s$, one needs to keep in mind that $L$ has different values for F1 and FH2.

\begin{figure}
  \centering
  \includegraphics[width=15.9cm]{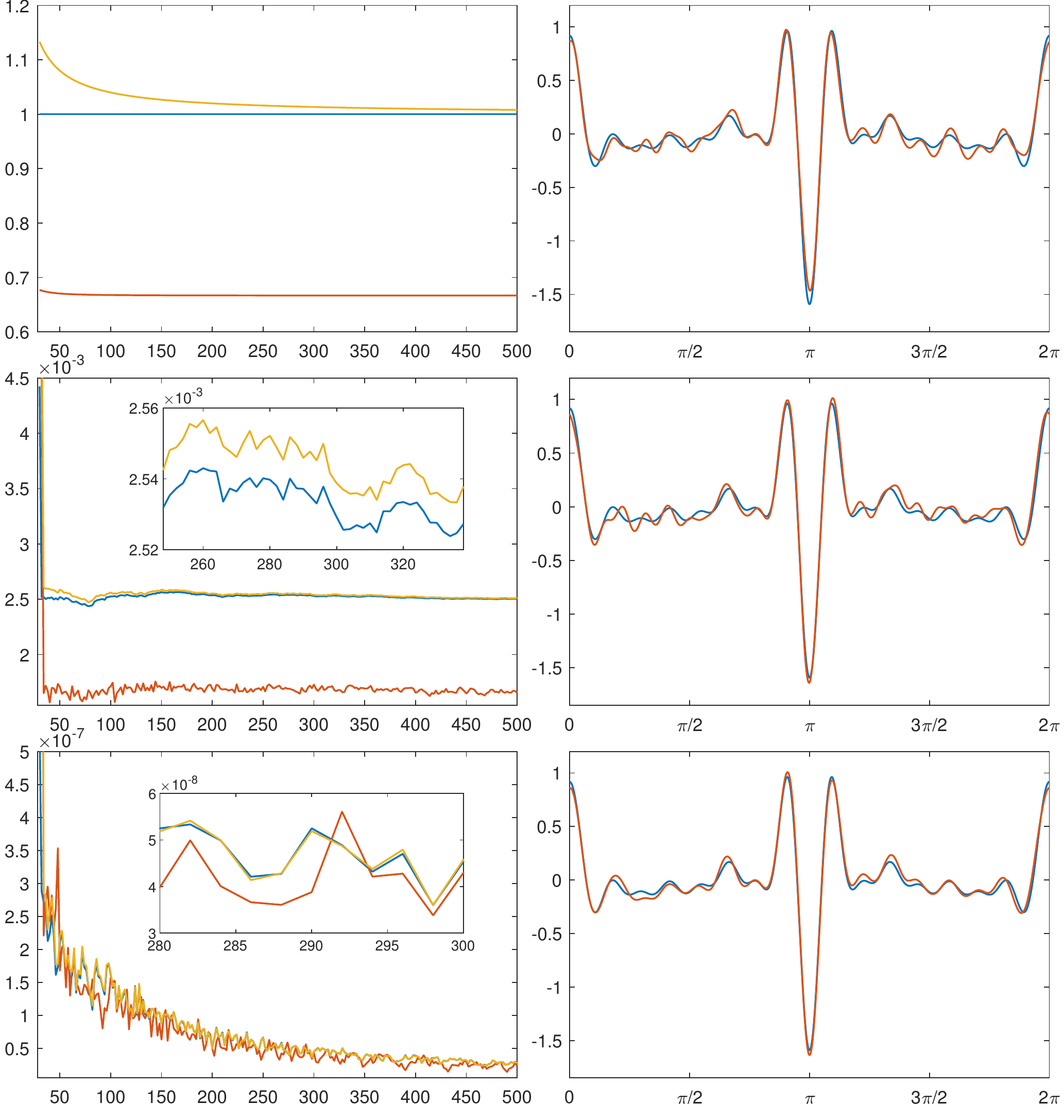}
  \caption{Left 1: $\frac{1 }{L} \sum_{m=1}^{M} \sum_{n\in I^{\mathbf{N}}}\abs{r_{m}(n)}^2$ for F1 (blue), FH2 (yellow) and FD2 (orange). Left 2: the average of the mean square error of $50$ experiments for F1 (blue), FH2 (yellow) and FD2 (orange). Left 3: the variance of the mean square error of $50$ experiments for F1 (blue), FH2 (yellow) and FD2 (orange). The horizontal axis in Left 1-3 represents  the total number of samples, namely $N_s$. Right 1-3 are the reconstructed results (red lines) for F1, FH2 and FD2 respectively in the noisy environment with $N_s=56, \sigma_{\epsilon}=0.05$; the blue line is the  original test function $\phi_B(t)$. }\label{noise-error}
\end{figure}

Having  introduced the Fourier coefficients of the interpolation functions for F1, FH2 and FD2,  we have that
\begin{equation*}
   \frac{1}{N_s}\sum_{n\in I^{\mathbf{N}}}\abs{r(n,\mathrm{F1},N_s)}^2  =1,
\end{equation*}
\begin{equation*}
  \frac{1 }{L} \sum_{m=1}^{M} \sum_{n\in I^{\mathbf{N}}}\abs{r_{m} (n,\mathrm{FH2},N_s )}^2 =1+\frac{4}{N_s},
\end{equation*}
\begin{equation*}
  \frac{1 }{L} \sum_{m=1}^{M} \sum_{n\in I^{\mathbf{N}}}\abs{r_{m} (n,\mathrm{FD2},N_s )}^2 = \frac{2}{3}+\frac{28}{3  {N_s}^2}.
\end{equation*}
 Besides the theoretical error estimate, the experiments are conducted  to compare the reconstructed results by  F1, FH2 and FD2. Let
\begin{equation}\label{testfunc}
\phi(z) = \frac{0.08 z^2+0.06z^{10}}{(1.3-z)(1.5-z)} + \frac{0.05z^3+0.09z^{10}}{(1.2+z)(1.3+z)},
\end{equation}
 \begin{equation}\label{Dkernel}
D(t,k_1,k_2) =  \sum_{n=k_1}^{k_2}e^{\qi n t},
\end{equation}
  \begin{equation*}
\phi_B(t)=   \phi(e^{\qi   t})*D(t,-16,16).
\end{equation*}
If $k_2=-k_1>0$,  $D(t,k_1,k_2)$  is the Dirichlet kernel of order $k_2$.  We use $\phi_B(t)$ as the test function. Obviously, it is bandlimited  with the bandwidth $33$.
 The theoretical errors, the experimental errors    and  the reconstructed results are shown in Figure \ref{noise-error} and some conclusions can be drawn   as follows.
\begin{enumerate}
  \item  FD2 performs better than F1 in terms of noise immunity and FH2 has the worst performance.
  \item  As the total  number of samples increases, the expectation of   mean square error (EMSE) would not  decrease  if there is no  additional correction   made in the multichannel reconstruction.
  \item   The variance of   mean square error (VMSE) is bounded and it  decreases as $N_s$  increases.
\end{enumerate}
\begin{remark}
   In the second  row first column of Figure \ref{noise-error},  we see that  the errors of F1, FH2 and FD2  become   significantly large  when $N_s<33$. This is because  the test function has the bandwidth $33$, the  reconstruction error is caused   not only   by noise but also by aliasing.
\end{remark}

\section{Multichannel reconstruction from noisy samples}\label{S3}

The  MCI cannot work well if one observes noisy data because $ f_{\mathbf{N},\epsilon}(t) $ does not converge to $f(t)$ in the sense of the expectation of  mean square error (EMSE).  To alleviate this problem, some smoothing corrections are required. If $f(t)$ is bandlimited and the number of samples  is larger than the bandwidth, we only need to consider the error caused by noise.  Suppose that $f\in B_\mathbf{K}, I^{\mathbf{K}}=\{k,K_1\leq k\leq K_2\}$ and $\mu(I^{\mathbf{K}})\leq \mu(I^{\mathbf{N}})=N_s$, where $N_s$ is the total number of samples. In this section, the techniques of post-filtering, pre-filtering,  regularized approximation   are applied to reconstruct $f$ from noisy  samples.

\subsection{Post-filtering}

In \cite{Pawlak2003postfilter}, the ideal low-pass post-filtering is  applied to the Shannon sampling formula and the error of signal reconstruction is also evaluated. Different from the previous work, we first derive the EMSE of  the reconstruction by   MCI  and post-filtering. Then the filter is obtained by solving the optimization problem that minimizes the EMSE.

\subsubsection{Formulation of post-filtering}

 A natural smoothing approach for the reconstructed signal is to convolute $ f_{\mathbf{N},\epsilon}(t) $ with a function $w(t)   \in B_\mathbf{K}$. Let
\begin{equation*}
  \widetilde{f}(t,N_s,\mathbf{K}) =  (f_{\mathbf{N},\epsilon}*w)(t).
\end{equation*}
 Note that
\begin{equation*}
   f*D(\cdot,K_1,K_2)(t) =f(t)
\end{equation*}
provided that $f\in B_\mathbf{K}$. It follows that
 \begin{align*}
 & \widetilde{f}(t,N_s,\mathbf{K})-f(t) \\
    =&   f_{\mathbf{N},\epsilon}*w (t)- f*D(\cdot,K_1,K_2)(t)\\
    =  & \left[  f*(w-D(\cdot,K_1,K_2)) \right] (t) + \frac{1}{L}\sum_{m=1}^{M} \sum_{p=0}^{L-1}\epsilon_{m,p} \left[y_m*w\right](t-\tfrac{2\pi p}{L}).
 \end{align*}
Since $\{\epsilon_{m,p}\}$ is an i.i.d. noise process
with $\mathbb{E}[ \epsilon_{m,p} ] = 0$, then
\begin{align*}
    & \mathbb{E} \left(\abs{\widetilde{f}(t,N_s,\mathbf{K})-f(t)}^2\right) \\
  = &\abs{\left[  f*(w-D(\cdot,K_1,K_2)) \right] (t)}^2+\frac{1}{L^2}\sum_{m=1}^{M} \sum_{p=0}^{L-1}\abs{\left[y_m*w\right](t-\tfrac{2\pi p}{L})}^2    \mathbb{E} [\epsilon_{m,p}^2]\\
  =&\abs{\left[  f*(w-D(\cdot,K_1,K_2)) \right] (t)}^2+ \frac{1}{L^2}\sum_{m=1}^{M} \sum_{p=0}^{L-1}\abs{\left[y_m*w\right](t-\tfrac{2\pi p}{L})}^2   \sigma_{\epsilon}^2.
\end{align*}
Denote  the Fourier coefficient of $w$ by   $\beta_k$, it follows that
\begin{align*}
&\mathbb{E} \left(\frac{1}{2\pi} \int_{0}^{2\pi} \abs{ \widetilde{f}(t,N_s,\mathbf{K})-f(t)}^2 dt\right) \\
   = &  \frac{1}{2\pi}\int_{0}^{2\pi}  \mathbb{E}\abs{\widetilde{f}(t,N_s,\mathbf{K})-f(t)}^2 dt\\
   =&\norm{   f*(w-D(\cdot,K_1,K_2))   }_2^2+ \frac{\sigma_\epsilon^2 }{L } \sum_{m=1}^{M}   \norm{y_m *w}_2^2\\
      =& \sum_{k=K_1}^{K_2} \abs{a(k) (\beta_k-1)}^2 + \frac{\sigma_\epsilon^2 }{L } \sum_{m=1}^{M}  \sum_{k=K_1}^{K_2} \abs{r_m(k,\mathrm{Type},N_s)\beta_k}^2 .
\end{align*}
 \begin{remark}
    Since the functions considered here are square integrable, the interchange of expectation and integral  is   permissible by the dominated convergence theorem. There are some similar cases happening elsewhere in the paper, we will omit the explanations.
 \end{remark}

Let $\bm{\beta}  = (\beta_{K_1} ,\cdots,\beta_{K_2} )^\text{T}$ and
\begin{equation}\label{opti_pro_post}
\Phi_1(\bm{\beta})= \sum_{k=K_1}^{K_2} \abs{a(k) (\beta_k-1)}^2 + \frac{\sigma_\epsilon^2 }{L } \sum_{m=1}^{M}  \sum_{k=K_1}^{K_2} \abs{r_m(k,\mathrm{Type},N_s)\beta_k}^2 .
\end{equation}
Since $\abs{\abs{\beta_k}-1}\leq \abs{\beta_k-1}$ and the equality holds only if
$\beta_k\geq 0$, it follows that
\begin{equation*}
\Phi_1(\bm{\beta}_+) -\Phi_1(\bm{\beta}) = \sum_{k=K_1}^{K_2}\abs{a(k)} \left(\abs{\abs{\beta_k}-1}-\abs{\beta_k-1}\right)\leq 0,
\end{equation*}
where $\bm{\beta}_+ =(\abs{\beta_{K_1}} ,\cdots,\abs{\beta_{K_2} })^\text{T} $. Thus, if
\begin{equation*}\label{key}
 \bm{\beta}^* = (\beta_{K_1}^*,\cdots,\beta_{K_2}^*)^\text{T} = \mathop{\arg\min}\limits_{\bm{\beta}} \Phi_1(\bm{\beta}),
\end{equation*}
 then $\beta_{k}^*\geq 0$ for every $K_1\leq k \leq K_2$. To minimize $ \Phi_1(\bm{\beta})$, we rewrite it as follows:
\begin{equation*}
\Phi_1(\bm{\beta}) = \norm{\mathbf{A}_+\bm{\beta}-\mathbf{a}_+}_2^2 + \frac{\sigma_\epsilon^2 }{L }\sum_{m=1}^{M} \norm{\mathbf{R}_{m,+}\bm{\beta}}_2^2,
\end{equation*}
where
\begin{equation*}
\mathbf{a}_+ = \left(\abs{a(K_1)} ,\cdots,\abs{a(K_2)}\right)^\text{T},\quad \mathbf{A}_+ = \operatorname{diag}(\mathbf{a}_+),
\end{equation*}
\begin{equation*}
\mathbf{R}_{m,+} = \operatorname{diag}(\abs{r_m(K_1,\mathrm{Type},N_s)},\cdots,\abs{r_m(K_2,\mathrm{Type},N_s)}).
\end{equation*}
Differentiating $\Phi_1(\bm{\beta})$ with respect to $\bm{\beta}$ and solving $\nabla \Phi_1(\bm{\beta})=0$, we obtain the optimal solution for minimizing the expectation of   mean square error. That is,
\begin{equation}\label{opt_post_filter}
 \bm{\beta}^* = \left(\mathbf{A}_+ ^\text{T}\mathbf{A}_+  +\frac{\sigma_\epsilon^2 }{L }\sum_{m=1}^{M} \mathbf{R}_{m,+}^\text{T} \mathbf{R}_{m,+} \right)^{-1} \mathbf{A}_+ ^\text{T}\mathbf{a}_+.
\end{equation}

\subsubsection{Estimation of spectral density}\label{s312}

The formula (\ref{opt_post_filter}) gives the optimal values for the parameters of   post-filtering, to minimize the difference (EMSE) between the filtered and the original (clean)  signal $f(t)$. The key problem is that the square of absolute value of $a(n)$, namely the spectral density of $f(t)$, is unknown in typical cases.  Thus we have to estimate the value of $\abs{a(n)}^2$ from the noisy multichannel samples.

There are various techniques for spectral  density estimation. The representative methods are periodogram, Welch's method, autoregressive model  and moving-average model,   etc.  Here, we provide an unbiased estimation for $\abs{a(n)}^2$ by using the uncorrelatedness  of signal and noise.

Let
\begin{equation*}
\mathbf{s}_m = (s_{m,0},s_{m,1},\cdots,s_{m,L-1})^{\text{T}},\quad 1\leq m \leq M,
\end{equation*}
\begin{equation*}
\mathbf{g}_m = (g_m(t_0),g_m(t_1),\cdots, g_m(t_{L-1}))^{\text{T}},\quad t_p=\tfrac{2\pi p}{L},1\leq m \leq M,
\end{equation*}
\begin{equation*}
\bm{\epsilon}_m = (\epsilon_{m,0},\epsilon_{m,1},\cdots,\epsilon_{m,L-1})^{\text{T}},\quad 1\leq m \leq M,
\end{equation*}
then
\begin{equation*}
  \mathbf{s}_m = \mathbf{g}_m + \bm{\epsilon}_m .
\end{equation*}
To estimate  $\abs{a(n)}^2$, we need to introduce the vector $\mathbf{d}_0$ and $\mathbf{d}_\epsilon$, where
\begin{equation}\label{ddd}
   \mathbf{d}_0=\frac{1}{L} \begin{bmatrix}
	\F_L\U_L\mathbf{g}_1 \\
	\F_L\U_L\mathbf{g}_2 \\
	\vdots \\
	\F_L\U_L\mathbf{g}_M \\
	\end{bmatrix},\quad   \mathbf{d}_\epsilon=\frac{1}{L} \begin{bmatrix}
	\F_L\U_L\mathbf{s}_1 \\
	\F_L\U_L\mathbf{s}_2 \\
	\vdots \\
	\F_L\U_L\mathbf{s}_M \\
	\end{bmatrix}.
\end{equation}
Here, $\F_L$ is  the $L$-th order DFT matrix
	\begin{equation}\label{DFTmatrix}
	\F_L=\begin{bmatrix}
	\omega^0 & \omega^0& \omega ^0&\cdots &\omega^0\\
	\omega^0 & \omega^1& \omega^2 &\cdots &\omega^{L-1}\\
	\omega^0 & \omega^2& \omega^4 &\cdots &\omega^{2(L-1)}\\
	\vdots&\vdots&\vdots& \ddots & \vdots \\
	\omega^0 & \omega^{L-1}& \omega^{2(L-1)} &\cdots &\omega^{(L-1)^2}\\
	\end{bmatrix}
	\end{equation}
	with $\omega=e^{ {-2\pi\qi}/{L}}$ and $\U_L$ is a diagonal matrix
	\begin{equation}\label{shiftmatrix}
	\U_L= \begin{bmatrix}
	\omega^0\\
	&\omega^{N_1}& &\text{{\huge 0}}\\
	& & \omega^{2{N_1}} \\
	& \text{{\huge0}} & & \ddots\\
	& & & & \omega^{(L-1){N_1}}
	\end{bmatrix}.
	\end{equation}

Note that $\F_L^{*}=L \F_L^{-1}$, $\U_L^{*}=\U_L^{-1}$ and $\{\epsilon_{m,p}\}$ is an i.i.d. noise process, it follows that
\begin{equation*}
  \mathbb{E} [  \mathbf{d}_\epsilon \mathbf{d}_\epsilon^*] =\frac{\sigma_\epsilon^2}{L} \mathbf{I} +\mathbf{d}_0 \mathbf{d}_0^*.
\end{equation*}
Let $\mathbf{B}$ be a $N_s$ by $N_s$ matrix and the entry in the $m$-th row and $n$-th column of  $\mathbf{B}$ is
\begin{equation*}%\label{defrm}
\mathbf{B}(m,n)= \begin{cases}
\mathbf{H}_{N_1+k-1}^{-1}(j+1,i+1),&  \text{if}~~ m=iL+k , n=jL+k,   \\
0 & \text{otherwise},
\end{cases}
\end{equation*}
where $1\leq k\leq L$ and  $0\leq i,j \leq M-1$.
By direct computations, we have that
\begin{equation*}
\mathbb{E} [ \mathbf{B} \mathbf{d}_\epsilon \mathbf{d}_\epsilon^*\mathbf{B}^*]= \mathbf{B} (\frac{\sigma_\epsilon^2}{L} \mathbf{I} +\mathbf{d}_0\mathbf{d}_0^*) \mathbf{B}^* = \frac{\sigma_\epsilon^2}{L}\mathbf{B}\mathbf{B}^*+\mathbf{B} \mathbf{d}_0 \mathbf{d}_0^*\mathbf{B}^*.
\end{equation*}
If $f$ is bandlimited, it can be verified that the diagonal element of $\mathbf{B} \mathbf{d}_0 \mathbf{d}_0^*\mathbf{B}^*$ is equal to  $\abs{a(n)}^2$ (by a similar method for proving Lemma 1 in \cite{cheng2019FFTMCI}). It follows that the diagonal element of
\begin{equation}\label{spectral_estimate}
 \mathbf{B} \mathbf{d}_\epsilon \mathbf{d}_\epsilon^*\mathbf{B}^* -   \frac{\sigma_\epsilon^2}{L}\mathbf{B}\mathbf{B}^*
\end{equation}
is an unbiased estimation for $\abs{a(n)}^2$.

To  validate the effectiveness of the above method for estimating   spectral density, the noisy multichannel samples are applied to estimate  $\abs{a(n)}^2$ by the formula (\ref{spectral_estimate}) experimentally. We will perform a series of experiments   under different   quantities and types of samples. Let
\begin{equation}\label{testfunc1}
f(t) = \sum_{n=N_1}^{N_2} a(n)e^{\qi nt},\quad N_1=-2,N_2=3
\end{equation}
be the test function, where $ a(-2)=1+\qi,a(-1)=2-\qi,a(0)=1,a(1)=2+\qi,a(2)=1-\qi,a(3)=0 $. The  mean square error (MSE)  for estimating the spectral density of $f$ is
defined by
\begin{equation*}
\delta_{sde} = \frac{\sum_{n=N_1}^{N_2}\abs{\abs{a(n)}^2-\tilde{A}(n)}^2}{N_2-N_1+1},
\end{equation*}
where $\tilde{A}(n)$ is the $(N_1-n+1)$-th diagonal element of $\mathbf{B} \mathbf{d}_\epsilon \mathbf{d}_\epsilon^*\mathbf{B}^* -   \frac{\sigma_\epsilon^2}{L}\mathbf{B}\mathbf{B}^*$.
 To show the performance of the estimation more accurately, each  experiment will be repeated $1000$ times and the corresponding average MSE  is an approximation of the expectation of MSE.

\begin{table}[ht]
\centering
\caption{The experimental results of multichannel based method for spectral density estimation. The first row displays the total number of samples used in each experiment. The second and third rows display the number of samples of $f$ and $f'$ used in each experiment respectively. The error of  spectral density estimation is given in the last row.} \vspace{0.2cm}
\begin{tabular}{|c|c|c|c|c|c|c|c|c|}
	\hline
  $N_s$ & $6$ & $6$ &  $30$& $60$ & $60$ &$300$  & $600$ &$600$  \\
	\hline
  $f$	& $6$ & $3$ & $30$ & $60$ & $30$  & $300$  & $600$ &$300$  \\
	\hline
  $f'$	& $0$ & $3$ & $0$ & $0$ & $30$  & $0$  & $0$ & $300$ \\
	\hline
Average MSE	& $0.3390$ &  $0.3427$& $0.0673$ & $0.0335$ & $0.0356$ & $0.0071$ & $0.0034$ & $0.0036$ \\
	\hline
\end{tabular}\label{table1}
\end{table}

The experimental results are presented in Table \ref{table1}.  The second column indicates that if we use $6$ samples of $f$ to estimate   spectral density, the expectation of MSE is   approximately equal to $0.3390$. It can be seen that the expectation of MSE for spectral density estimation varies in inverse proportion to the total number of samples. In other words, the experimentally obtained MSE, i.e. $\delta_{sde}$, tends to $0$ as the total number of samples goes to infinity and  if the same total  number of samples are used to estimate  spectral density, the fluctuations of  MSE
caused by different sampling schemes are not significant. Besides, it is noted that the traditional single-channel  based  method for   spectral density estimation    can not utilize the multichannel information to improve the accuracy. By contrast, the proposed multichannel based method  fuses the different types of samples, thereby extending the scope of application  and enhancing the precision, as seen from the column four and six of Table  \ref{table1}.

\subsection{Pre-filtering}

If $f$ is bandlimited, it can be expressed as
\begin{equation*}
  f(t)= \frac{1}{L} \sum_{m=1}^{M} \mathbf{g}_m^{\text{T}} \U_L \F_L  \mathbf{v}_m(t),
\end{equation*}
where
	\begin{equation*}
	 {\mathbf{v}}_m(t)=\left(v_{m,N_1}(t), v_{m,N_1+1}(t),\cdots, v_{m,L+N_1-1}(t)\right)^{\text{T}},\quad v_{m,n}(t)=\sum_{k=1}^M  q_{mk}(n) e^{\qi (n+kL-L)t}
	\end{equation*}
	  for  $ n\in I_1$. We consider to filter the noisy multichannel samples $\mathbf{s}_1,\mathbf{s}_2,\cdots,\mathbf{s}_M$ by modifying the  frequency components in the DFT domain. Let
\begin{equation*}
  \tilde{\mathbf{s}}_m = \U_L^{-1} \F_L^{-1} \bm{\Lambda}_m \F_L\U_L   \mathbf{s}_m, \quad  \bm{\Lambda}_m = \operatorname{diag}(\lambda_{m, N_1},\lambda_{m, N_1+1},\cdots,\lambda_{m,L+N_1-1}),
\end{equation*}
and construct a function of form
\begin{equation*}
  \breve{f}(t) = \frac{1}{L} \sum_{m=1}^{M} {\tilde{\mathbf{s}}_m}^{\text{T}} \U_L \F_L  \mathbf{v}_m(t) =  \frac{1}{L} \sum_{m=1}^{M}  \mathbf{s}_m ^{\text{T}}\U_L \F_L \bm{\Lambda}_m\mathbf{v}_m(t) .
\end{equation*}

In this part, we want to determine the values of $\lambda_{m,n}$, $1\leq m\leq M,n\in I_1$, such that  $\breve{f}(t)$ be a good estimation of $f(t)$. The square of absolute value for the difference of $f(t)$ and $\breve{f}(t)$ is
 \begin{align*}
   \abs{ \breve{f}(t) -f(t)}^2 & =\abs{\frac{1}{L} \sum_{m=1}^{M}  (\mathbf{g}_m ^{\text{T}}+\bm{\epsilon}_m^{\text{T}} )\U_L \F_L \bm{\Lambda}_m\mathbf{v}_m(t)-\frac{1}{L} \sum_{m=1}^{M} \mathbf{g}_m^{\text{T}} \U_L \F_L  \mathbf{v}_m(t) }^2 \\
     & = \abs{\frac{1}{L} \sum_{m=1}^{M}   \mathbf{g}_m ^{\text{T}} \U_L \F_L (\bm{\Lambda}_m-\mathbf{I})\mathbf{v}_m(t)+\frac{1}{L} \sum_{m=1}^{M} \bm{\epsilon}_m^{\text{T}} \U_L \F_L  \mathbf{v}_m(t) }^2.
 \end{align*}
 \begin{remark}
 To blend the information of   different types of samples, the frequency band has to be partitioned (shown as follows for the case of $M=3,L=5$).   The vector $\mathbf{v}_m(t)$ plays an important role in the multichannel reconstruction. The frequency bands of the  $L$ elements for $\mathbf{v}_m(t)$ are located in the $L$ positions corresponding to different colors. The role of $\U_L$ is to shift the zero-frequency component to the center of spectrum.
   \begin{center}
\begin{tikzpicture}
\draw[pink] (0,0) rectangle (0.5,0.5);
\fill[pink] (0,0) rectangle (0.5,0.5);
\draw[blue] (0.5,0) rectangle (1,0.5);
\fill[blue] (0.5,0) rectangle (1,0.5);
\draw[red] (1,0) rectangle (1.5,0.5);
\fill[red] (1,0) rectangle (1.5,0.5);
\draw[cyan] (1.5,0) rectangle (2,0.5);
\fill[cyan] (1.5,0) rectangle (2,0.5);
\draw[teal] (2,0) rectangle (2.5,0.5);
\fill[teal] (2,0) rectangle (2.5,0.5);

\draw[pink] (2.5,0) rectangle (3,0.5);
\fill[pink] (2.5,0) rectangle (3,0.5);
\draw[blue] (3,0) rectangle (3.5,0.5);
\fill[blue] (3,0) rectangle (3.5,0.5);
\draw[red] (3.5,0) rectangle (4,0.5);
\fill[red] (3.5,0) rectangle (4,0.5);
\draw[cyan] (4,0) rectangle (4.5,0.5);
\fill[cyan] (4,0) rectangle (4.5,0.5);
\draw[teal] (4.5,0) rectangle (5,0.5);
\fill[teal] (4.5,0) rectangle (5,0.5);

\draw[pink] (5,0) rectangle (5.5,0.5);
\fill[pink] (5,0) rectangle (5.5,0.5);
\draw[blue] (5.5,0) rectangle (6,0.5);
\fill[blue] (5.5,0) rectangle (6,0.5);
\draw[red] (6,0) rectangle (6.5,0.5);
\fill[red] (6,0) rectangle (6.5,0.5);
\draw[cyan] (6.5,0) rectangle (7,0.5);
\fill[cyan] (6.5,0) rectangle (7,0.5);
\draw[teal] (7,0) rectangle (7.5,0.5);
\fill[teal] (7,0) rectangle (7.5,0.5);

\draw[densely dashed] (0,-0.25) rectangle (7.5,0.75);
\draw[thick] (2.5,-0.25) to (2.5,0.75);
\draw[thick] (5,-0.25) to (5,0.75);
\end{tikzpicture}
\end{center}
\end{remark}

Let
\begin{align*}
  u_1(t) & =\frac{1}{L} \sum_{m=1}^{M}   \mathbf{g}_m ^{\text{T}} \U_L \F_L (\bm{\Lambda}_m-\mathbf{I})\mathbf{v}_m(t), \\
  u_2(t) & =\frac{1}{L} \sum_{m=1}^{M} \bm{\epsilon}_m^{\text{T}} \U_L \F_L \bm{\Lambda}_m \mathbf{v}_m(t).
\end{align*}
We have that
\begin{align*}
  \mathbb{E}\left(\abs{u_2(t)}^2\right)   & =\mathbb{E} \left(\frac{1}{L^2}\sum_{m=1}^{M}\abs{\bm{\epsilon}_m^{\text{T}} \U_L \F_L \bm{\Lambda}_m \mathbf{v}_m(t)}^ 2 \right) \\
    &  =\mathbb{E} \left(\frac{1}{L^2}\sum_{m=1}^{M}\abs{\bm{\epsilon}_m^{\text{T}}  \mathbf{z}_m(t)}^ 2 \right)\\
    & = \mathbb{E} \left(\frac{1}{L^2}\sum_{m=1}^{M}\sum_{p\in I_1}\sum_{k\in I_1} \epsilon_{m,p-N_1+1}\overline{\epsilon_{m,k-N_1+1}}z_{m,p}(t)\overline{z_{m,k}(t)}\right)\\
    &= \frac{\sigma_{\epsilon}^2}{L^2}\sum_{m=1}^{M}\sum_{p\in I_1}\abs{z_{m,p}(t)}^2.
\end{align*}
Here, we denote $\U_L \F_L \bm{\Lambda}_m \mathbf{v}_m(t)$ by
\begin{equation*}
  \mathbf{z}_m(t) = \left(z_{m,N_1}(t), z_{m,N_1+1}(t),\cdots, z_{m,L+N_1-1}(t)\right)^{\text{T}}
\end{equation*}
 and the first equality and last equality  are  direct consequences of the independence of noise. By the definition of $r_m(n)$, we know that $q_{mk}(n) =r_m(n+kL-L)$ for $n\in I_1 $, it follows that
 \begin{equation*}
   v_{m,n}(t)=\sum_{k=1}^{M} r_m(n+kL-L) e^{\qi (n+kL-L)t},\quad n\in  I_1.
 \end{equation*}
Thus
\begin{align*}
  z_{m,p}(t) & = \omega^{(p-N_1) N_1} \sum_{n\in I_1} \lambda_{m,n} v_{m,n}(t) \omega^{(n-N_1)(p-N_1)} \\
    &  = \sum_{k=1}^{M} \sum_{n\in I_1}  \omega^{ n (p-N_1)}\lambda_{m,n}r_m(n+kL-L) e^{\qi (n+kL-L)t}.
\end{align*}
It follows from the Parseval's identity  that
\begin{align*}
  \mathbb{E}\left(\frac{1}{2\pi} \int_{0}^{2\pi}\abs{u_2(t)}^2 dt\right) &  =  \frac{1}{2\pi} \int_{0}^{2\pi} \mathbb{E}\left(\abs{u_2(t)}^2\right) dt \\
   & = \frac{\sigma_{\epsilon}^2}{L^2}\sum_{m=1}^{M}\sum_{p\in I_1}   \frac{1}{2\pi} \int_{0}^{2\pi} \abs{z_{m,p}(t)}^2dt \\
   & = \frac{\sigma_{\epsilon}^2}{L^2}\sum_{m=1}^{M}\sum_{p\in I_1}\sum_{k=1}^{M} \sum_{n\in I_1} \abs{\lambda_{m,n}r_m(n+kL-L)}^2\\
   & = \frac{\sigma_{\epsilon}^2}{L }\sum_{m=1}^{M} \sum_{k=1}^{M} \sum_{n\in I_1} \abs{\lambda_{m,n}r_m(n+kL-L)}^2.
\end{align*}

By direct computations, we have that
\begin{align*}
    u_1(t) &= \sum_{m=1}^{M}\sum_{n\in I_1} (\lambda_{m,n}-1) d_{m}(n)v_{m,n}(t),\\
    &  =  \sum_{m=1}^{M}\sum_{n\in I_1} \sum_{k=1}^{M}(\lambda_{m,n}-1) d_{m}(n) r_m(n+kL-L) e^{\qi (n+kL-L)t}
\end{align*}
where
\begin{equation*}
  d_m(n) = \sum_{k=1}^{M}a(n+kL-L)b_m(n+kL-L).
\end{equation*}
It follows from the Parseval's identity  that
\begin{equation*}
   \frac{1}{2\pi} \int_{0}^{2\pi}  \abs{u_1(t)}^2  dt = \sum_{k=1}^{M}\sum_{n\in I_1}\abs{ \sum_{m=1}^{M}(\lambda_{m,n}-1) d_{m}(n) r_m(n+kL-L)}^2.
\end{equation*}

Note that
$\abs{ \breve{f}(t) -f(t)}^2=\abs{ u_1(t) +u_2(t)}^2$ and by the independence  of noise,
we can  use the integrations of $u_1$ and $u_2$ to express the    expectation of  MSE; that is,
\begin{align}
   &\mathbb{E} \left(  \frac{1}{2\pi} \int_{0}^{2\pi}  \abs{ \breve{f}(t) -f(t)}^2 dt\right)\nonumber\\
  = &\mathbb{E} \left(  \frac{1}{2\pi} \int_{0}^{2\pi}  \abs{ u_1(t) + u_2(t)}^2 dt\right)\nonumber   \\
  = &\frac{1}{2\pi} \int_{0}^{2\pi}  \abs{u_1(t)}^2  dt+ \mathbb{E}\left(\frac{1}{2\pi} \int_{0}^{2\pi}\abs{u_2(t)}^2 dt\right).\nonumber
\end{align}
It follows that
\begin{align}
   &\mathbb{E} \left(  \frac{1}{2\pi} \int_{0}^{2\pi}  \abs{ \breve{f}(t) -f(t)}^2 dt\right)\nonumber
 \\
 = & \sum_{k=1}^{M} \norm{\sum_{m=1}^{M} \widetilde{\mathbf{R}}_{m,k}\mathbf{D}_m{\bm\lambda}_m  - \widetilde{\mathbf{R}}_{m,k}\mathbf{d}_m}_2^2 +\frac{\sigma_{\epsilon}^2}{L }\sum_{m=1}^{M} \sum_{k=1}^{M} \norm{\widetilde{\mathbf{R}}_{m,k}{\bm\lambda}_m}_2^2,  \label{MSE2}
\end{align}
where
\begin{align*}
  \widetilde{\mathbf{R}}_{m,k} &  = \operatorname{diag}(r_m(N_1+kL-L),r_m(N_1+1+kL-L),\cdots,r_m(N_1-1+kL)), \\
   \mathbf{d}_m &  = (d_m(N_1),d_m(N_1+1),\cdots,d_m(N_1+L-1))^{\mathrm{T}}, \\
   \mathbf{D}_m&= \operatorname{diag} (\mathbf{d}_m),  \\
   {\bm\lambda}_m& = (\lambda_{m, N_1},\lambda_{m, N_1+1},\cdots,\lambda_{m,L+N_1-1})^{\mathrm{T}}.
\end{align*}
Suppose that the matrices and vectors involved in (\ref{MSE2}) are real-valued. To minimize the above expectation of  MSE, we denote (\ref{MSE2}) by $\Phi_2({\bm\lambda}_1,{\bm\lambda}_2,\cdots,{\bm\lambda}_M)$. Differentiating $\Phi_2$ with respect to ${\bm\lambda}_1,{\bm\lambda}_2,\cdots,{\bm\lambda}_M$ and solving
\begin{equation}\label{critical_point}
  \nabla_{{\bm\lambda}_1} \Phi_2  =0, \  \nabla_{{\bm\lambda}_3} \Phi_2  =0,\ \cdots, \ \nabla_{{\bm\lambda}_M} \Phi_2  =0,
\end{equation}
we can obtain the critical point. Since  $\Phi_2$ is a quadratic function, the critical point gives the unique solution   for the  optimization problem
\begin{equation}\label{opti_problem2}
   \operatorname{min} \Phi_2({\bm\lambda}_1,{\bm\lambda}_2,\cdots,{\bm\lambda}_M).
\end{equation}

Observe that the equation (\ref{critical_point}) can be expressed by the following system of linear  equations:
\begin{equation*}
  \bm\Psi  \bm\lambda = \bm \zeta,
\end{equation*}
where $\bm\Psi$ is a partitioned matrix with $M\times M$  blocks, $\bm\lambda $ and $\bm \zeta$  are partitioned  column vectors with $M$ blocks, and
 \begin{align*}
  {\bm\Psi}_{(m,n)}  & = \sum_{k=1}^{M} \mathbf{D}_m \widetilde{\mathbf{R}}_{m,k}\widetilde{\mathbf{R}}_{n,k}\mathbf{D}_n + \delta(m-n) \frac{\sigma_{\epsilon}^2}{L } \widetilde{\mathbf{R}}_{m,k}\widetilde{\mathbf{R}}_{n,k},\\
   {\bm\lambda}_{(m)} & ={\bm\lambda}_{m}, \\
  {\bm \zeta}_{(m)} & = \sum_{k=1}^{M} \sum_{n=1}^{M} \mathbf{D}_m\widetilde{\mathbf{R}}_{m,k}\widetilde{\mathbf{R}}_{n,k}\mathbf{d}_m,
 \end{align*}
 then the solution of (\ref{opti_problem2}) is given by
 \begin{equation}\label{opti_sol_2}
    \bm\lambda^* =  \bm\Psi^{-1}  \bm \zeta.
 \end{equation}

If (\ref{MSE2}) contains  complex-valued matrices or vectors, then (\ref{opti_problem2}) becomes an optimization problem with complex variables.  In this case,   the objective function can be rewritten as a function of the real and imaginary parts of its complex argument. For any complex matrix $\mathbf{A} = \mathbf{A}_r + \mathbf{A}_i \qi\in \mathbb{C}^{p\times q}$,  $\mathbf{A}_r,\mathbf{A}_i\in \mathbb{R}^{p\times q}$, we define an  operator $\Gamma: \mathbb{C}^{p\times q}\to \mathbb{R}^{2p\times 2q}$ such that
\begin{equation*}
  \Gamma(\mathbf{A}) = \begin{bmatrix}
 \mathbf{A}_r & - \mathbf{A}_i  \\
 \mathbf{A}_i &\mathbf{A}_r
	\end{bmatrix}\in \mathbb{R}^{2p\times 2q}.
\end{equation*}
 Similarly, for  any complex vector $\mathbf{b} = \mathbf{b}_r + \mathbf{b}_i \qi\in \mathbb{C}^{p}$,  $\mathbf{b}_r,\mathbf{b}_i\in \mathbb{R}^{p}$, we define an  operator $\gamma: \mathbb{C}^{p}\to \mathbb{R}^{2p}$ such that
\begin{equation*}
  \gamma(\mathbf{b}) = \begin{bmatrix}
 \mathbf{b}_r    \\
 \mathbf{b}_i
	\end{bmatrix}\in \mathbb{R}^{2p}.
\end{equation*}
It can be verified that $\Gamma$ and $\gamma$ have the following properties.
\begin{enumerate}
  \item [(1)]  $\Gamma$ and $\gamma$ are   invertible.
  \item  [(2)]   $\Gamma(c_1\mathbf{A}+c_2\mathbf{B}) = c_1 \Gamma( \mathbf{A})+ c_2\Gamma( \mathbf{B})$,   $\forall c_1,c_2\in\mathbb{R},\forall\mathbf{A},\mathbf{B}\in  \mathbb{C}^{p\times q}$.
  \item [(3)]  $\|\gamma(\mathbf{b})\|_2  =\| \mathbf{b}\|_2 $, $\forall \mathbf{b}\in\mathbb{C}^{p}$.
  \item [(4)] $\gamma(\mathbf{A}\mathbf{b}) = \Gamma(\mathbf{A})\gamma(\mathbf{b})$, $\forall \mathbf{A}\in\mathbb{C}^{p\times q}, \forall \mathbf{b}\in \mathbb{C}^{p}$.
\end{enumerate}
 By using the defined operators $\Gamma$ and $\gamma$, we can reformulate the objective function as
 \begin{equation*}
    \sum_{k=1}^{M} \norm{\sum_{m=1}^{M} \Gamma\left(\widetilde{\mathbf{R}}_{m,k}\mathbf{D}_m\right)\gamma\left({\bm\lambda}_m\right)  - \gamma\left( \widetilde{\mathbf{R}}_{m,k}\mathbf{d}_m\right)}_2^2 +\frac{\sigma_{\epsilon}^2}{L }\sum_{m=1}^{M} \sum_{k=1}^{M} \norm{\Gamma\left(\widetilde{\mathbf{R}}_{m,k}\right)\gamma\left({\bm\lambda}_m\right)}_2^2.
 \end{equation*}
 Thus the optimization problem  can still be solved by  the standard method.
\begin{remark}
   We have provided the closed-form solution for the optimization problem (\ref{opti_problem2}). If the size of the data   is very large,  some existing  software packages,   for example,  CVX \cite{cvx}, can  also be used to solve the optimization problem.
\end{remark}

The formula (\ref{opti_sol_2}) gives the optimal values for the parameters of pre-filtering. To compute $\bm\lambda^*$,  we have to estimate the value of $\mathbf{d}_m$ ($1\leq m\leq M$) from the noisy multichannel samples. In fact, if $f(t)$ is bandlimited with bandwidth $\leq N_s$, then $\mathbf{d}_\epsilon$ is an unbiased estimation for $(\mathbf{d}_1;\mathbf{d}_2;\cdots;\mathbf{d}_M )$ (see more details in Section 3 of \cite{cheng2019FFTMCI}). It is obvious that  the spectral density  of a function wouldn't change with the sampling schemes in the multichannel reconstruction. For $\mathbf{d}_m$, however, it changes with the types and amounts of samples even for a fixed $f(t)$, as shown in Table \ref{table2}. Different from post-filtering, the bandwidth of the reconstructed signal $\breve{f}$ based on pre-filtering equals the number of samples,  it can't be   adjusted to be equal to the bandwidth of $f$ by  taking specific parameters.

\begin{table}[ht]
\centering
\caption{The values of $\mathbf{d}_m$  ($1\leq m\leq M$) for the test function defined by (\ref{testfunc1}) under different sampling schemes.} \vspace{0.2cm}
\begin{tabular}{|l|l|}
	\hline
  Sampling schemes  & The true values of $\mathbf{d}_m$    \\
	\hline
  $L=6$, $M=1$, $f$	& $\mathbf{d}_1=(1+\qi,2-\qi,1,2+\qi,1-\qi,0)$    \\
	\hline
 $L=3$, $M=2$, $f$, $f'$		& $\mathbf{d}_1=(3+2\qi,3-2\qi,1)$,  $\mathbf{d}_2=(1,1,0)$  \\
	\hline
$L=3$, $M=2$, $f$, $\mathcal{H}f$		& $\mathbf{d}_1=(3+2\qi,3-2\qi,1)$,  $\mathbf{d}_2=(-\qi,\qi,0)$  \\
	\hline
$L=8$, $M=1$, $f$ & $\mathbf{d}_1=(0,1+\qi,2-\qi,1,2+\qi,1-\qi,0,0)$  \\
	\hline
 $L=4$, $M=2$, $f$, $f'$		& $\mathbf{d}_1=(2+\qi,2,2-\qi,1)$,  $\mathbf{d}_2=(2\qi-1,4,-1-2\qi,0)$  \\
	\hline
 $L=4$, $M=2$, $f$, $\mathcal{H}f$		& $\mathbf{d}_1=(2+\qi,2,2-\qi,1)$,  $\mathbf{d}_2=(1-2\qi,-2, 1+2\qi,0)$  \\
	\hline
\end{tabular}\label{table2}
\end{table}

\subsection{A simple comparison between pre-filtering and post-filtering}

It can be seen that   the model of pre-filtering  is more intricate than that of post-filtering. A direct strategy to simplify pre-filtering is     enforcing
\begin{equation}\label{specical_case}
{\bm\lambda}_1={\bm\lambda}_2=\cdots={\bm\lambda}_M.
\end{equation}
 In other words, pre-filtering every channel of samples in the same way. Under this restriction, the objective function becomes
\begin{equation*}
  \Phi_2(\bm\lambda_0)=\sum_{k=1}^{M} \norm{\sum_{m=1}^{M} \widetilde{\mathbf{R}}_{m,k}\mathbf{D}_m{\bm\lambda}_0   - \widetilde{\mathbf{R}}_{m,k}\mathbf{d}_m}_2^2 +\frac{\sigma_{\epsilon}^2}{L }\sum_{m=1}^{M} \sum_{k=1}^{M} \norm{\widetilde{\mathbf{R}}_{m,k}{\bm\lambda}_0 }_2^2,
\end{equation*}
where ${\bm\lambda}_0 =(\lambda_{N_1},\lambda_{N_2},\cdots,\lambda_{L+N_1-1}) $.
By the definition of $r_m$ , we have
\begin{equation*}
\sum_{m=1}^{M} b_m(n+jL-L)r_m(n+kL-L) =\delta(j-k),\quad \forall n\in I_1.
\end{equation*}
% Kronecker delta function
It follows that
\begin{align*}
    & \sum_{m=1}^{M}d_m(n)r_m(n+kL-L) \\
  = &  \sum_{m=1}^{M} \left( \sum_{j=1}^{M} a(n+jL-L)b_m(n+jL-L)\right)r_m(n+kL-L)\\
  =&\sum_{j=1}^{M} a(n+jL-L)\sum_{m=1}^{M} b_m(n+jL-L)r_m(n+kL-L) \\
  =&\sum_{j=1}^{M} a(n+jL-L)\delta(j-k)\\
  =& a(n+kL-L),\quad \forall n\in I_1.
\end{align*}
Therefore
\begin{equation*}
  \sum_{m=1}^{M} \widetilde{\mathbf{R}}_{m,k}\mathbf{d}_m =\left(  a(N_1+kL-L), a(N_1+1+kL-L),\cdots, a(N_1+kL-1) \right)^{\text{T}},
\end{equation*}
 \begin{equation*}
   \widetilde{\mathbf{R}}_{m,k}\mathbf{D}_m =  \operatorname{diag} \left(\sum_{m=1}^{M} \widetilde{\mathbf{R}}_{m,k}\mathbf{d}_m \right).
 \end{equation*}
Thus the objective function can be simplified as
\begin{equation*}
   \Phi_2(\bm\lambda)=\sum_{k=1}^{M}\sum_{n\in I_1} \abs{a(n+kL-L)(\lambda_n-1)}^2 + \frac{\sigma_\epsilon^2}{L}\sum_{m=1}^{M}\sum_{k=1}^{M}\sum_{n\in I_1}\abs{r_m(n+kL-L)\lambda_n}^2.
\end{equation*}
Recalling the objective function of post-filtering  defined by (\ref{opti_pro_post}), we see that the pre-filtering under the condition that ${\bm\lambda}_1={\bm\lambda}_2=\cdots={\bm\lambda}_M$ is a special case of  post-filtering.  This means that pre-filtering and post-filtering have a nontrivial intersection.

 It is noted that if $M=1$, the condition defined by (\ref{specical_case}) is satisfied naturally.
When $M=1$, the objective function of post-filtering  becomes
\begin{equation*}
   \Phi_1(\bm\beta) =  \sum_{k=K_1}^{K_2} \abs{a(k) (\beta_k-1)}^2 + \frac{\sigma_\epsilon^2 }{L } \sum_{k=K_1}^{K_2} \abs{r_1(k,\mathrm{Type},N_s)\beta_k}^2 .
\end{equation*}
For the sampling scheme F1, we have $r_1(k,\mathrm{Type},N_s)=1$, then
\begin{equation*}
   \Phi_1(\bm\beta,\mathrm{F1}) =  \sum_{k=K_1}^{K_2} \abs{a(k) (\beta_k-1)}^2 + \frac{\sigma_\epsilon^2 }{L } \sum_{k=K_1}^{K_2} \abs{ \beta_k}^2 .
\end{equation*}
It follows that
\begin{equation*}
 {\arg\min} ~\Phi_1(\bm\beta,\mathrm{F1}) = \left(  \abs{a( k )}^2\left(\abs{a(k)}^2+\frac{\sigma_\epsilon^2}{L} \right)^{-1}:~~ K_1 \leq k \leq K_2\right).
\end{equation*}
It   is actually the Wiener filter \cite{chen2006new}.
By the analysis of Section \ref{S3}, we see that if $M=1$, the post-filtering is equivalent to the pre-filtering followed by a ideal low-pass filtering. Therefore
\begin{equation*}
 {\arg\min} ~\Phi_2(\bm\lambda,\mathrm{F1}) = \left(  \abs{a( k )}^2\left(\abs{a(k)}^2+\frac{\sigma_\epsilon^2}{L} \right)^{-1}:~~ N_1 \leq k \leq N_2\right).
\end{equation*}
From the above discussion, we conclude that the   pre-filtering  and the  post-filtering are two extensions of the Wiener filter in the multichannel reconstruction setting.

\begin{figure}
  \centering
  \includegraphics[width=8cm]{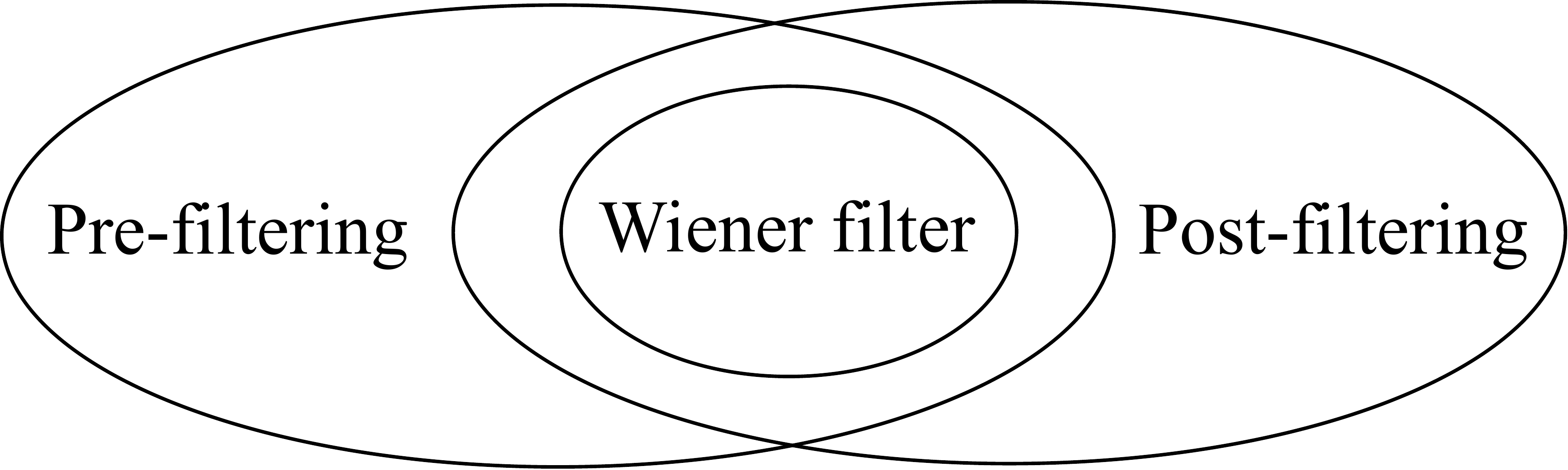}
  \caption{Inclusion relationship of filtering processes}\label{inclusion}
\end{figure}

\subsection{Regularized approximation}

If the multichannel samples $\mathbf{s}_1,\mathbf{s}_2,\cdots,\mathbf{s}_M$ are noiseless.  The goal of reconstruction is to find a function $  \tilde{f}(t)$ such that
 the residual
 $ \sum_{m=1}^{M} \norm{\mathbf{\tilde{g}}_m -\mathbf{s}_m}$
 is small, where
\begin{equation*}
\mathbf{\tilde{g}}_m = \left(\tilde{f}*h_m(t_0),\tilde{f}*h_m(t_1),\cdots, \tilde{f}*h_m(t_{L-1})\right)^{\text{T}},\quad t_p=\tfrac{2\pi p}{L},1\leq m \leq M.
\end{equation*}
That is, to solve the  optimization problem:
\begin{equation*}
   \mathrm{minimize}   \  \  \sum_{m=1}^{M} \norm{\mathbf{\tilde{g}}_m -\mathbf{s}_m}
\end{equation*}
with respect to $\tilde{f}$.
If the multichannel samples are noisy, it is desirable that a small variation in $\mathbf{s}_m$ ($1\leq m \leq M$) would not cause a large variation in $\tilde{f}$. Generally, however, the optimal value  of the above optimization problem is sensitive to the data $\mathbf{s}_m$ ($1\leq m \leq M$). A common used approach to deal with the sensitivity is to  add a penalty term $ \varphi(\tilde{f})$ in  optimization \cite{Ramani2008nonideal}.
Consequently, the reconstruction problem can be described as a optimization problem of minimizing a weighted sum of   two objectives:
\begin{equation*}
  \mathrm{minimize} \ \     \sum_{m=1}^{M} \norm{\mathbf{\tilde{g}}_m -\mathbf{s}_m} + \alpha \varphi(\tilde{f}).
\end{equation*}

Suppose that
\begin{equation*}
\tilde{f}(t) = \sum_{n=N_1}^{N_2}x(n) e^{\qi nt},
\end{equation*}
we rewrite the above optimization problem as follows:
\begin{equation*}
   \mathrm{minimize} \ \     \sum_{m=1}^{M} \norm{\mathbf{C}_m \mathbf{x} -\mathbf{s}_m}_2^2 + \alpha \varphi(\mathbf{x} ),
\end{equation*}
where $\mathbf{x}=\left(x(N_1),x(N_1+1),\cdots,x(N_2)\right)^{\text{T}}$ and
\begin{equation*}
 \mathbf{C}_m =   \begin{bmatrix}
b_m(N_1)e^{\qi N_1t_0} & b_m(N_1+1)e^{\qi (N_1+1)t_0}  &\cdots &b_m(N_2)e^{\qi  N_2 t_0} \\
b_m(N_1)e^{\qi N_1t_1} &  b_m(N_1+1)e^{\qi (N_1+1)t_1}  &\cdots &b_m(N_2)e^{\qi  N_2 t_1} \\
\vdots&\vdots& ~ & \vdots \\
b_m(N_1)e^{\qi N_1t_{L-1}} &  b_m(N_1+1)e^{\qi (N_1+1)t_{L-1}}  &\cdots &b_m(N_2)e^{\qi  N_2 t_{L-1}}
\end{bmatrix}.
\end{equation*}
The key problem is how to design the penalty term $\varphi$ that brings robustness to the reconstruction.

 Let
\begin{equation*}
  \mathbf{W}_\eta = \operatorname{diag}\left(1+\abs{N_1}^\eta,1+\abs{N_1+1}^\eta,\cdots,1+\abs{N_2}^\eta \right)
\end{equation*}
be a weighting matrix, we consider the following penalties:
\begin{equation*}
  \varphi_1(\mathbf{x}) =\sigma_{\epsilon}^2\norm{\mathbf{W}_{\eta}\mathbf{x}}_1,\quad  \varphi_2(\mathbf{x}) =\sigma_{\epsilon}^2\norm{\mathbf{W}_{\eta}\mathbf{x}}_2^2.
\end{equation*}
The role of $\mathbf{W}_\eta$ is to  suppress the high frequency to achieve smoothing, and the parameter  $\eta$ is used to control the amount of smoothness.
The $l_1$-norm appeared in $\varphi_1(\mathbf{x}) $ induces  sparsity \cite{donoho2006stable} for the spectral of $\tilde{f}$. The hypothesis of sparsity is based on the priori knowledge that the actual effective information of a signal $f$ is concentrated in a low-dimensional space even if    it is high dimensional.
The other penalty term $\varphi_2(\mathbf{x})$ is a common form  of regularization based on the $l_2$-norm, say Tikhonov regularization. This regularization results in a  quadratic optimization problem, thus  enabling an explicit solution.

By using the operators $\Gamma$ and $\gamma$, the objective function of the $l_2$ regularization  can be rewritten as
\begin{equation*}
      \sum_{m=1}^{M} \norm{\Gamma\left(\mathbf{C}_m\right) \gamma\left(\mathbf{x}\right) -\gamma\left(\mathbf{s}_m\right) }_2^2 + \alpha\sigma_{\epsilon}^2\norm{\Gamma\left(\mathbf{W}_{\eta}\right)\gamma\left(\mathbf{x}\right)}_2^2.
\end{equation*}
To solve  the optimization problem of the $l_1$ regularization with complex variables, we need to use the permutation matrix $\mathbf{P}_K$  corresponding to the permutation
\begin{equation*}
  (1,2,\cdots,K)\mapsto (1,\frac{K}{2}+1,2,\frac{K}{2}+2,\cdots,\frac{K}{2},K),
\end{equation*}
where $K$ is an even number. Let $\mathbf{y}=\mathbf{W}_{\eta} \mathbf{x}$, then the objective function
\begin{equation*}
    \sum_{m=1}^{M} \norm{\mathbf{C}_m \mathbf{x} -\mathbf{s}_m}_2^2 + \alpha\sigma_{\epsilon}^2 \norm{\mathbf{W}_{\eta}\mathbf{x}}_1
\end{equation*}
 becomes
 \begin{align*}
     &  \sum_{m=1}^{M} \norm{\mathbf{C}_m\mathbf{W}_{\eta}^{-1} \mathbf{y} -\mathbf{s}_m}_2^2 + \alpha\sigma_{\epsilon}^2 \norm{ \mathbf{y}}_1\\
     =&    \sum_{m=1}^{M} \norm{\Gamma\left(\mathbf{C}_m\mathbf{W}_{\eta}^{-1}\right) \gamma\left(\mathbf{y}\right) -\gamma(\mathbf{s}_m)}_2^2 + \alpha\sigma_{\epsilon}^2 \norm{ \mathbf{y}}_1
 \end{align*}
Let $\mathbf{y}^{(p)} =\mathbf{P}_K \gamma\left(\mathbf{y}\right) $ and we partition  $\mathbf{y}^{(p)} $ to be $(\mathbf{y}^{(p)}_1;\mathbf{y}^{(p)}_2;\cdots;\mathbf{y}^{(p)}_{\frac{K}{2}})$, where $\mathbf{y}^{(p)}_k \in \mathbb{R}^2$, $1\leq k\leq{\frac{K}{2}}$. Namely, we view $\mathbf{y}^{(p)}$ as a partitioned column vector and its $k$-th block is a two dimensional vector $\mathbf{y}^{(p)}_k $.  By the definition of $\mathbf{y}^{(p)}$  and using the operator $\gamma$, we have that
\begin{equation*}
  \norm{ \mathbf{y}}_1 = \sum_{k=1}^{K/2}\norm{\mathbf{y}^{(p)}_k}_2.
\end{equation*}
Thus, the objective function of the $l_1$ regularization can be rewritten
as
\begin{equation*}
  \sum_{m=1}^{M} \norm{\Gamma\left(\mathbf{C}_m\mathbf{W}_{\eta}^{-1}\right) \mathbf{P}_K^{\text{T}} \mathbf{y}^{(p)} -\gamma(\mathbf{s}_m)}_2^2 + \alpha\sigma_{\epsilon}^2 \sum_{k=1}^{K/2}\norm{\mathbf{y}^{(p)}_k}_2.
\end{equation*}
It is actually a sum-of-norms regularization problem \cite{Ohlsson2010} and it can be solved by the Alternating Direction Method of Multipliers (ADMM) framework  \cite{Boyd2011}.

\section{Theoretical convergence analysis of post-filtering}\label{S4}

First, we assume that  $f\in B_\mathbf{K}, I^{\mathbf{K}}=\{k,K_1\leq k\leq K_2\}$ and $\mu(I^{\mathbf{K}})\leq \mu(I^{\mathbf{N}})=N_s$, where $N_s$ is the total number of samples. Let $w(t)   \in B_\mathbf{K}$.
The minimum value of  $\Phi_1(\bm\beta,\mathrm{F1})$  is equal to
\begin{equation*}
  \Phi_1(\bm\beta^*,\mathrm{F1})  = \sum_{k=K_1}^{K_2} \frac{\abs{a(k)}^2\sigma_\epsilon^2}{\abs{a(k)}^2\cdot L+ \sigma_\epsilon^2}.
\end{equation*}
It follows that
\begin{equation*}
  \lim\limits_{L\to \infty} \Phi_1(\bm\beta^*,\mathrm{F1}) = 0.
\end{equation*}
Similarly,
\begin{equation*}
  \lim\limits_{L\to \infty} \Phi_2(\bm\lambda_0^*,\mathrm{F1}) = 0.
\end{equation*}
Using the previously computed $r_m$ for FH2 and FD2, we have that
\begin{equation*}
\Phi_1(\bm\beta^*,\mathrm{FH2}) = \frac{2\abs{a(0)}^2\sigma_{\epsilon}^2}{\abs{a(0)}^2 L+2\sigma_{\epsilon}^2} + \sum_{K_1\leq k\leq K_2,k\neq 0} \frac{\abs{a(k)}^2\sigma_{\epsilon}^2}{2\abs{a(k)}^2L+\sigma_{\epsilon}^2},
\end{equation*}
and
\begin{equation*}
\Phi_1(\bm\beta^*,\mathrm{FD2}) = \sum_{k=K_1}^{K_2}\frac{\abs{a(k)}^2(1+(L-\abs{k})^2)\sigma_{\epsilon}^2}{\abs{a(k)}^2L^3+(1+(L-\abs{k})^2)\sigma_{\epsilon}^2}.
\end{equation*}
We see that both $\Phi_1(\bm\beta^*,\mathrm{FH2})$ and $\Phi_1(\bm\beta^*,\mathrm{FD2}) $ are convergent to $0$ as $L\to \infty$.

An important choice of $w(t)$ for post-filtering is $D(t,K_1,K_2)$ defined by (\ref{Dkernel}), an analogue of the ideal low-pass filter. Note that
\begin{equation*}
f*D(\cdot,K_1,K_2)(t) =f(t)
\end{equation*}
provided that $f\in B_\mathbf{K}$. It follows that
\begin{align*}
\widetilde{f}(t,N_s,\mathbf{K})-f(t)  & =  f_{\mathbf{N},\epsilon}*D(\cdot,K_1,K_2) (t)- f*D(\cdot,K_1,K_2)(t)\\
& = \left( (f_{\mathbf{N},\epsilon}-f)*D(\cdot,K_1,K_2)\right) (t)\\
& = \frac{1}{L}\sum_{m=1}^{M} \sum_{p=0}^{L-1} \epsilon_{m,p}\left( y_m *D(\cdot,K_1,K_2) \right)(t-\tfrac{2\pi p}{L}).
\end{align*}
Since $\{\epsilon_{m,p}\}$ is an i.i.d. noise process, we have that
\begin{align*}
& \mathbb{E}\abs{\widetilde{f}(t,N_s,\mathbf{K})-f(t)}^2 \\
= &\frac{1}{L^2}\sum_{m=1}^{M} \sum_{p=0}^{L-1}\abs{\left( y_m *D(\cdot,K_1,K_2) \right)(t-\tfrac{2\pi p}{L})}^2  \mathbb{E} [\epsilon_{m,p}^2]\\
=& \frac{1}{L^2}\sum_{m=1}^{M} \sum_{p=0}^{L-1}\abs{\left( y_m *D(\cdot,K_1,K_2) \right)(t-\tfrac{2\pi p}{L})}^2   \sigma_{\epsilon}^2.
\end{align*}
Therefore
\begin{align*}
&\mathbb{E} \left(\frac{1}{2\pi} \int_{0}^{2\pi} \abs{ \widetilde{f}(t,N_s,\mathbf{K})-f(t)}^2 dt\right) \\
= &  \frac{1}{2\pi}\int_{0}^{2\pi}  \mathbb{E}\abs{\widetilde{f}(t,N_s,\mathbf{K})-f(t)}^2 dt\\
=& \frac{\sigma_{\epsilon}^2}{L^2}\sum_{m=1}^{M} \sum_{p=0}^{L-1} \frac{1}{2\pi}\int_{0}^{2\pi} \abs{\left( y_m *D(\cdot,K_1,K_2) \right)(t-\tfrac{2\pi p}{L})}^2 dt\\
=& \frac{\sigma_\epsilon^2 }{L } \sum_{m=1}^{M}   \norm{y_m *D(\cdot,K_1,K_2)}_2^2= \frac{\sigma_\epsilon^2 }{L} \sum_{m=1}^{M} \sum_{n\in I^{\mathbf{K}}}\abs{r_m(n,\mathrm{Type},N_s)}^2 .
\end{align*}

We examine the accuracy of the reconstructed signal $\widetilde{f}(t,N_s,\mathbf{K})$
for   sampling schemes F1, FH2 and FD2. Without loss of generality, let $\mu (I^{\mathbf{K}})=2K_2$,  $K_1=1-K_2$, we have that
\begin{equation*}
\frac{\sigma_\epsilon^2 }{L} \sum_{m=1}^{M} \sum_{n\in I^{\mathbf{K}}}\abs{r_m(n,\mathrm{F1},N_s)}^2   = \frac{2K_2 \sigma_\epsilon^2}{N_s},
\end{equation*}
\begin{equation*}
\frac{\sigma_\epsilon^2 }{L} \sum_{m=1}^{M} \sum_{n\in I^{\mathbf{K}}}\abs{r_m(n,\mathrm{FH2},N_s)}^2   = \frac{(2K_2+3) \sigma_\epsilon^2}{N_s},
\end{equation*}
\begin{equation*}
\frac{\sigma_\epsilon^2 }{L} \sum_{m=1}^{M} \sum_{n\in I^{\mathbf{K}}}\abs{r_m(n,\mathrm{FD2},N_s)}^2   = \frac{4\sigma_\epsilon^2}{N_s} + \frac{56K_2\sigma_\epsilon^2}{3N_s^3} +  \frac{16K_2^3\sigma_\epsilon^2}{3N_s^3} - \frac{8K_2\sigma_\epsilon^2}{N_s^2}.
\end{equation*}
It is easy to see that $\frac{\sigma_\epsilon^2 }{L} \sum_{m=1}^{M} \sum_{n\in I^{\mathbf{K}}}\abs{r_m(n,\mathrm{Type},N_s)}^2 \to 0$ as $N_s\to \infty$. As for different sampling schemes,    FD2 performs better than F1   and FH2.

\begin{remark}
To  examine the convergence property of the optimal post-filtering for the sampling schemes in addition to F1, FH2 and FD2, one can consider firstly the post-filtering by the Dirichlet kernel for its simplicity, since the expectation of MSE for the optimal post-filtering is always no larger than that of   post-filtering by the Dirichlet kernel.
\end{remark}

\begin{figure}[t]
	\centering
	\includegraphics[width=12.0cm]{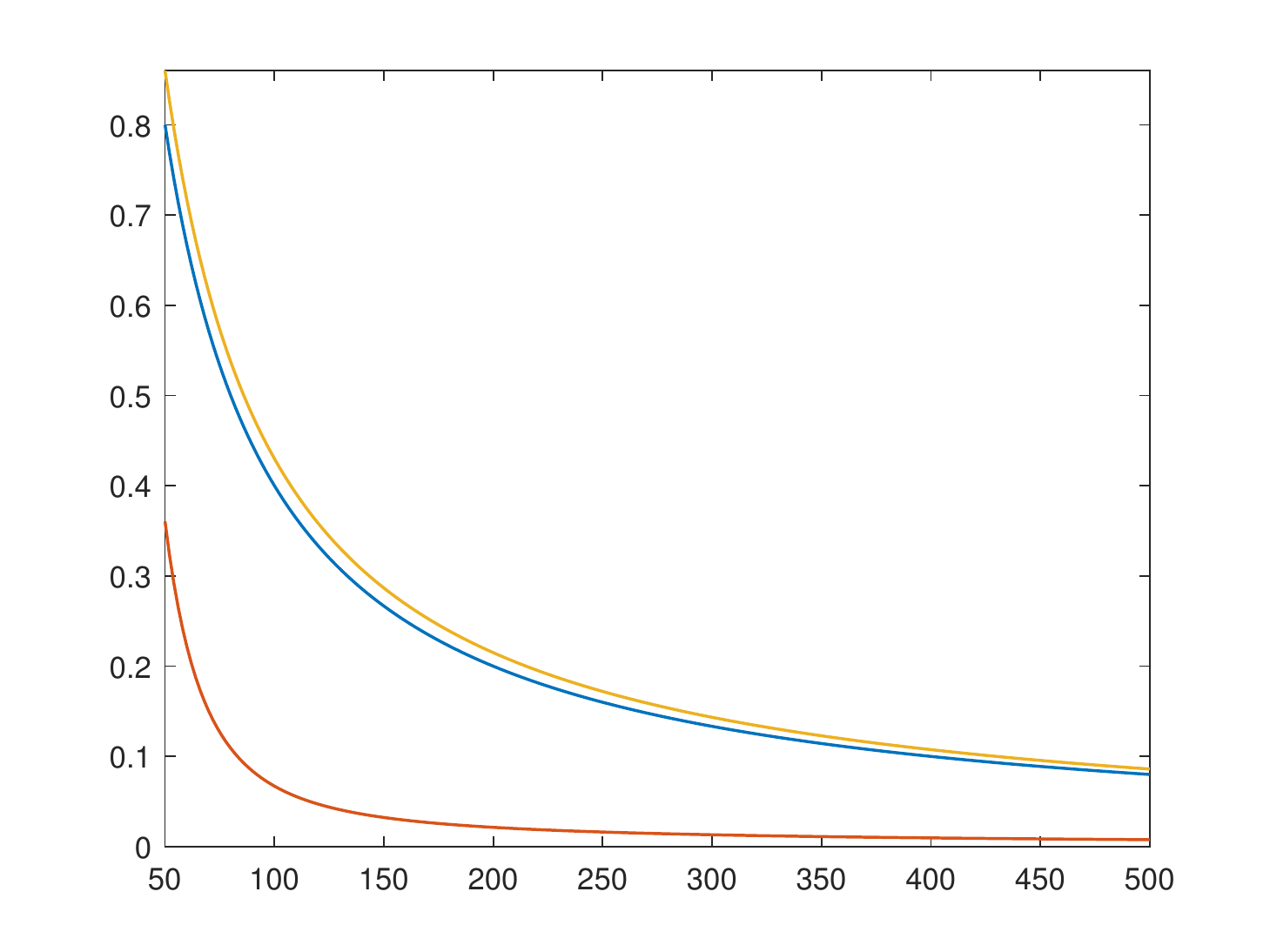}
	\caption{Plot of $\frac{\sigma_\epsilon^2 }{L} \sum_{m=1}^{M} \sum_{n\in I^{\mathbf{K}}}\abs{r_m(n,\mathrm{Type},N_s)}^2$  with $K_2=20$ for F1 (blue), FH2 (yellow) and FD2 (orange). The horizontal axis  represents  the total number of samples  $N_s$. }\label{lowpasserror}
\end{figure}

To analyze the convergence of post-filtering for the non-bandlimited signal, we need the following lemmas.
 \begin{lemma}\label{lem41}
   If $f\in C^j(\mathbb{T})$, then there exists a constant $\gamma$ such that $\abs{a(n)}\leq {\gamma }/{\abs{n}^j}$ for $n\neq 0$.
 \end{lemma}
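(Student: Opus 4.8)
The plan is to exploit the classical relationship between the smoothness of $f$ and the decay of its Fourier coefficients, obtained through repeated integration by parts. First I would write the Fourier coefficient as $a(n) = \frac{1}{2\pi}\int_{\T} f(t) e^{-\qi n t}\,dt$ and integrate by parts $j$ times. Because $f\in C^j(\T)$ is periodic, all boundary terms vanish (the periodicity forces $f^{(k)}(0)=f^{(k)}(2\pi)$ for $0\le k\le j-1$), and each step transfers one derivative from the exponential onto $f$ while producing a factor of $(\qi n)^{-1}$. This yields the identity $a(n) = \frac{1}{(\qi n)^{j}}\cdot\frac{1}{2\pi}\int_{\T} f^{(j)}(t) e^{-\qi n t}\,dt$, valid for $n\neq 0$, which expresses $a(n)$ in terms of the $n$-th Fourier coefficient of $f^{(j)}$.

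Next I would bound the remaining integral uniformly in $n$. Since $f^{(j)}$ is continuous on the compact set $\T$, it is bounded and hence absolutely integrable, so $\left|\frac{1}{2\pi}\int_{\T} f^{(j)}(t) e^{-\qi n t}\,dt\right| \le \frac{1}{2\pi}\int_{\T}\abs{f^{(j)}(t)}\,dt = \norm{f^{(j)}}_1$. Taking $\gamma := \norm{f^{(j)}}_1$, a finite and $n$-independent constant, and combining with the identity above gives $\abs{a(n)} = \frac{1}{\abs{n}^{j}}\left|\frac{1}{2\pi}\int_{\T} f^{(j)}(t) e^{-\qi n t}\,dt\right| \le \gamma/\abs{n}^{j}$ for every $n\neq 0$, which is precisely the claim.

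The argument is essentially routine, so I do not anticipate a serious obstacle; the only point requiring care is the vanishing of the boundary terms in the integration by parts, which is exactly where the periodicity built into the hypothesis $f\in C^j(\T)$ (as opposed to mere $C^j$ smoothness on an interval) is used. If one preferred to bypass the explicit integration by parts, an equivalent route would be to first establish that the Fourier coefficient of $f^{(j)}$ equals $(\qi n)^{j} a(n)$ and then invoke the Riemann--Lebesgue-type estimate $\abs{\widehat{f^{(j)}}(n)}\le \norm{f^{(j)}}_1$; either way the constant $\gamma$ can be taken to be the $L^1$-norm of the $j$-th derivative of $f$.
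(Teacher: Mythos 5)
Your proof is correct and is the standard integration-by-parts argument; the paper itself states Lemma \ref{lem41} without proof, treating it as the classical smoothness--decay fact, and your argument (boundary terms vanishing by periodicity, then bounding $\bigl|\widehat{f^{(j)}}(n)\bigr|$ by $\norm{f^{(j)}}_1$) is exactly the proof the authors implicitly rely on.
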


 \begin{lemma}\label{lem42}
   Let $c_k=\sum_{n=k+1}^{\infty} {n^{-\alpha}}$ with $\alpha>1$, then $$(\alpha-1)^{-1}(k+1)^{1-\alpha}<c_k<(\alpha-1)^{-1} k^{1-\alpha}$$ for all $k\in\mathbb{Z}^{+}$.
 \end{lemma}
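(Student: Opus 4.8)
The plan is to establish the two-sided bound by the standard integral-comparison argument for the tail of a convergent $p$-series. First I would put $f(x)=x^{-\alpha}$ and record that, because $\alpha>1>0$, this function is positive, continuous, and strictly decreasing on $[1,\infty)$, so that the improper integral $\int_k^{\infty} f(x)\,dx$ converges for every $k\in\mathbb{Z}^+$ and the defining series for $c_k$ converges as well.

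The key estimate is the elementary fact that for a strictly decreasing positive $f$ and any integer $n$,
\begin{equation*}
\int_n^{n+1} f(x)\,dx < f(n) < \int_{n-1}^{n} f(x)\,dx,
\end{equation*}
where strictness holds because $f(x)<f(n)$ throughout $(n,n+1)$ and $f(x)>f(n)$ throughout $(n-1,n)$. Summing the left-hand inequality over $n\geq k+1$ telescopes the integrals into $\int_{k+1}^{\infty} f(x)\,dx$, while summing the right-hand inequality produces $\int_{k}^{\infty} f(x)\,dx$; hence
\begin{equation*}
\int_{k+1}^{\infty} f(x)\,dx < c_k < \int_{k}^{\infty} f(x)\,dx.
\end{equation*}

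It then remains only to evaluate the bounding integrals. Since $\int_a^{\infty} x^{-\alpha}\,dx=(\alpha-1)^{-1}a^{1-\alpha}$ for $\alpha>1$, substituting $a=k+1$ on the left and $a=k$ on the right yields exactly the claimed inequalities $(\alpha-1)^{-1}(k+1)^{1-\alpha}<c_k<(\alpha-1)^{-1}k^{1-\alpha}$.

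There is no serious obstacle in this argument; the only points that deserve a line of justification are the strictness of the inequalities, which rests on $f$ being strictly rather than merely weakly decreasing, and the convergence of both the series and the comparison integral, both of which are supplied by the hypothesis $\alpha>1$. I would note that this lemma will be combined with Lemma \ref{lem41} to control the aliasing contribution for non-bandlimited signals, so only the explicit power-of-$k$ decay rate, not a sharper constant, is needed downstream.
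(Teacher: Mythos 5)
Your proof is correct and follows essentially the same route as the paper: the integral-comparison bound $\int_{k+1}^{\infty}x^{-\alpha}\,dx < c_k < \int_{k}^{\infty}x^{-\alpha}\,dx$ obtained from monotonicity of $x^{-\alpha}$, followed by explicit evaluation of the two integrals. Your version merely spells out the term-by-term inequalities and the telescoping step that the paper leaves implicit.
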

 \begin{proof}
 Since $\alpha>1$, then $ x^{-\alpha}$ is a monotonic decreasing function on $(0,+\infty)$. It follows that
    \begin{equation*}
      \int_{k+1}^{\infty} x^{-\alpha}dx <  \sum_{n=k+1}^{\infty}  \frac{1}{n^\alpha}<\int_{k}^{\infty} x^{-\alpha}dx,\quad \forall k\in\mathbb{Z}^{+}.
   \end{equation*}
Note that    $\int_{k+1}^{\infty} x^{-\alpha}dx =(\alpha-1)^{-1}(k+1)^{1-\alpha} $ and $\int_{k}^{\infty} x^{-\alpha}dx=(\alpha-1)^{-1} k^{1-\alpha}$, the proof is complete.
 \end{proof}

If $f$ is non-bandlimited, the error of  the reconstruction for $f$ by  post-filtering  comes from not only noise but also aliasing. Let $c_k =\sum_{\abs{n}\geq k+1}^{\infty} \abs{a(n)}^2$, where $a(n)$ is the Fourier coefficient  of $f$. Then $c_k$  tends to $0$ as $k\to \infty$ under the assumption that  $f\in L^2(\mathbb{T})$. By Lemmas \ref{lem41} and \ref{lem42}, we see that the convergence rate of  $c_k \to 0$ can be very fast if $f$ is smooth.  It is known that if  the number of    samples $N_s$ is  sufficient large such that $c_K$ ($K=(N_s-1)/2$) is    sufficient small, then the aliasing error is negligible. Therefore, if we take sufficient large $K_1,K_2$ in   post-filtering, the   aliasing error can be sufficient small.

For  pre-filtering and   regularized approximation, there is no guarantee that the expectation of MSE will tend to $0$ when the  number of samples goes to infinity. Nonetheless, it doesn't mean that
the performance of   post-filtering  is better than  pre-filtering and   regularized approximation in signal reconstruction when the   number of samples is finite.
It is important to clarify whether the reconstruction results using  smoothing corrections would converge to the original signal as the number of samples tends to infinity.
In practice, however, the total number of samples is finite. Therefore, we need to examine which of the aforementioned methods  can achieve preferable reconstruction results  under the same sample set of limited size. Besides, it is necessary to test  whether combining    pre-filtering and   post-filtering could give   some distinctive results. In the next section, the proposed smoothing and regularization strategies  are verified    comprehensively by  a number of numerical simulations.

\section{Numerical simulations}\label{S5}

In the last section we introduced three smoothing strategies to reduce  the effect of noise  in  multichannel reconstruction. To provide a more intuitive understanding of the above theoretical analysis, we give  a number of  examples to show the exact formulas of the proposed smoothing strategies in some concrete sampling schemes.

Recall $\phi(z)$ defined by (\ref{testfunc}), we use $f(t) = \phi(e^{\qi t})$ as the test function. Obviously, $f(t)$ is non-bandlimited. To use the proposed post-filtering, a suitable choice of $K_1$ and $K_2$ is needed. Given a set of noisy multichannel samples,   an estimate for $\abs{a(n)}^2$ is obtained  by applying the estimation method  for the  spectral density (see Section \ref{s312}).
Then we take  the smallest possible  $\widetilde{K}_1,\widetilde{K}_2$ (absolute value)  such that
\begin{equation}\label{K1K2}
 \frac{\sum_{n=\widetilde{K}_1}^{\widetilde{K}_2}  \tilde{A}(n) }{\sum_{n}  \tilde{A}(n) } \geq 0.9,
\end{equation}
where $\tilde{A}(n)$ is the estimate for  $\abs{a(n)}^2$. The selection of $\widetilde{\mathbf{K}}=\{n: \widetilde{K}_1\leq n \leq \widetilde{K}_2\}$ is  diagramed in Figure \ref{data_analysis}.
If the number of samples  is sufficiently large, the bandwidth of the post-filter can be   increased properly. Thus   the final selection of the bandwidth for post-filtering is
\begin{equation*}
  \mu(I^{\mathbf{K}}) = \max\{2\sqrt{N_s},\mu(I^{\widetilde{\mathbf{K}}})\}.
\end{equation*}

As for the parameters of $l_1$ and $l_2$ regularization, the experiential  values are $\eta =1.2$ and $\alpha=1$. It is hard to make out which set of parameters    $\eta,\alpha$ is the best due to the limited information of the signal acquired. But they can be tuned using   the estimated  spectral density. Roughly speaking, the faster decline of $\tilde{A}(n)$, the larger $\eta$; the  sparser of $\tilde{A}(n)$, the larger $\alpha$.

\begin{figure}
  \centering
  \includegraphics[width=10cm]{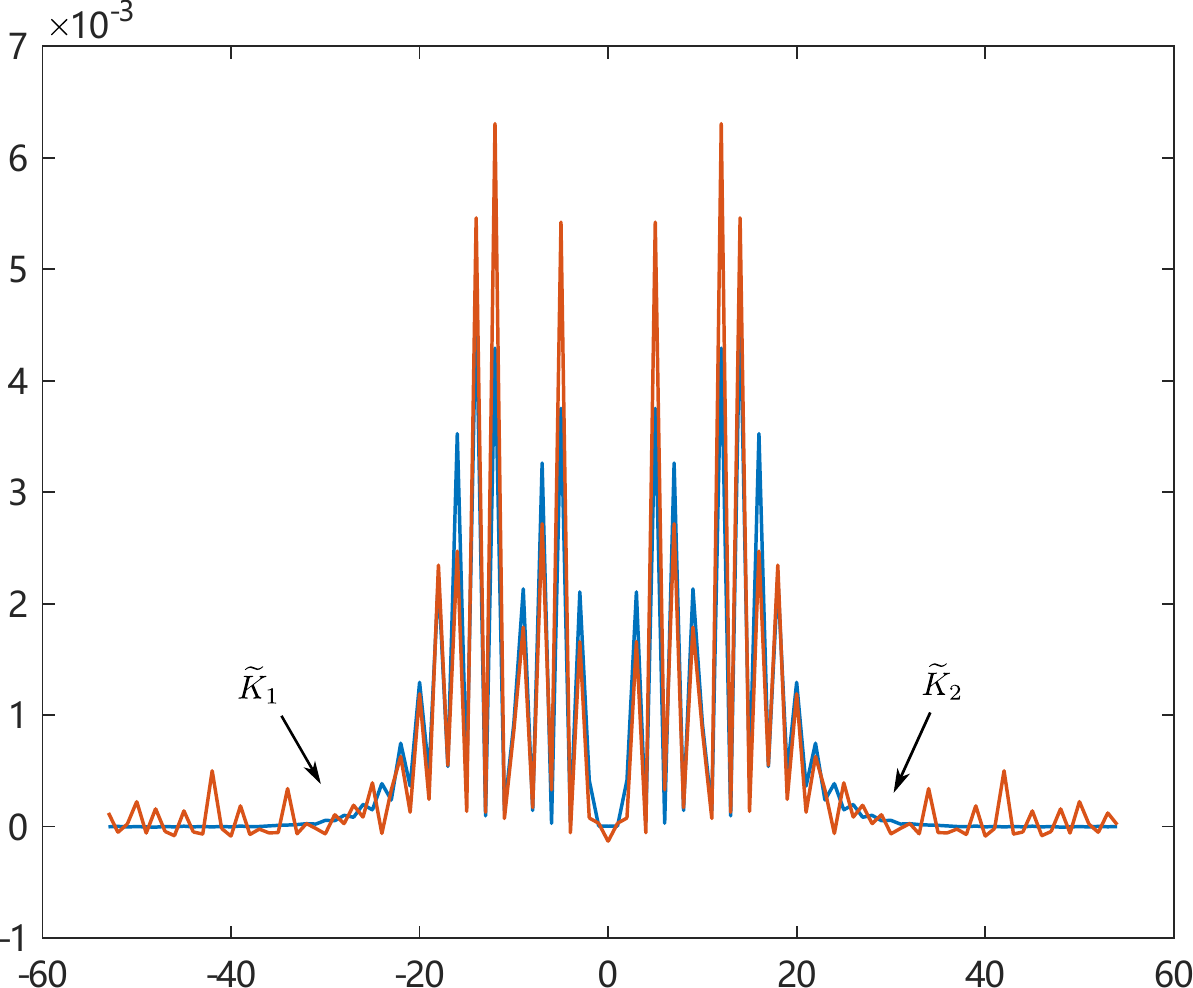}
  \caption{The blue line is the true value of $\abs{a(n)}^2$. The  orange line  is the estimated  spectral density $\tilde{A}(n)$   from $54$ noisy samples of $f$ and $54$ noisy samples of $\mathcal{H}f$.  The experiential values for $\widetilde{K}_1,\widetilde{K}_2$  can be set as the inflection points where $\tilde{A}(n)$  tapers off to zero.}\label{data_analysis}
\end{figure}

The experiments are conducted to compare the reconstructed results by the proposed methods under the sampling schemes of F1, FH2 and FD2.   The Gaussian noise with   standard deviation $0.05$ are superposed on the    samples of F1 and FH2. While the  samples of FD2 are   corrupted by the Gaussian noise with   standard deviation $0.1$. The expectation of mean square error (EMSE)  is approximated by  the average of mean square error over $10000$ times experiments. The  EMSEs for the  reconstructions in different situations  are listed in Table \ref{table3}, \ref{table4} and \ref{table5}. Besides,  Table \ref{table6} reports the running time for the reconstruction under the sampling scheme of FD2  in detail. Furthermore, the reconstructed results by the proposed methods under the sampling scheme  of   FD2 with $N_s = 1248$ is From the results we make the following conclusions.
\begin{enumerate}
  \item The experimental results prove that all the  reconstruction methods  based on smoothing or regularization  have improved in terms of accuracy    compared to the  MCI. Moreover, the errors  of reconstruction reduce  with the increase of the size of samples.
  \item If the sample size is small, the composite of pre-filtering and post-filtering  is effective. The double-stage  filtering can remove noise in the multichannel reconstruction  to a large extent.
  \item If the number of samples is relatively large,  the capability of  noise removal of pre-filtering is insufficient (see Figure \ref{Compare_results}). Whereas, using only the post-filtering can achieve a good performance in noise reduction.
 \item Although the errors of reconstruction by $l_1$ and $l_2$ regularization are relatively  small, the corresponding reconstructed signals look a bit rough  if no post-filtering is performed (see Figure \ref{Compare_results}). The $l_1$ or $l_2$ regularization  followed by post-filtering can achieve a slightly better performance than just using post-filtering. This improvement comes at the expense of high computational complexity. For example,  when $N_s= 1248$, $l_1$ regularization spend at least $1000$ times as much time as post-filtering on the reconstruction (see Table \ref{table6}).
\end{enumerate}

%\textbf{Experimental convergence analysis}.
We have proved the convergence of post-filtering  theoretically. This implies that the reconstructed signal obtained by post-filtering will converge to  $f$ provided that  the true value of $\abs{a(n)}^2$ is available. In a  real application, however, we can only use the estimate of $\abs{a(n)}^2$, i.e., $\tilde{A}(n)$, to perform post-filtering. It has been shown that the more samples taken in reconstruction, the more accurate the spectral density estimate.   Thus it is not surprising that the error of reconstruction by post-filtering tends to zero as the number of samples tends to infinity in practice. Moreover, as can be seen from Figure \ref{Convergent_of_Post_filt} that the errors of reconstruction by pre-filtering, $l_1$ and $l_2$ regularization in combination with post-filtering are also convergent to $0$ as $N_s\to \infty$.

The above  analysis suggests that the composite of pre-filtering and post-filtering  is   preferable to the other methods when the sample size is small. Otherwise,   post-filtering a good choice because it could produce a fine result  in a relatively short period. If one does not give a high priority to the execution time, $l_1$  (or $l_2$ ) regularization  combined with post-filtering  has the capability to achieve  a remarkable  reconstruction result provided that suitable parameters are selected.

%The parameters of post-filtering and pre-filtering are taken to minimize the error of reconstruction.

\section{Conclusion and discussion}\label{S6}

In this paper, we propose several smoothing and regularization based methods to reconstruct signal from its multichannel samples in the presence of noise. The  experiments show   that the proposed methods can be effectively used to  reduce the error caused by noise in multichannel reconstruction under limited size of samples. Moreover, we also develop the theoretical and experimental convergence analysis     to ensure the high precision in reconstruction   under large-scale samples.
 In addition,    some operators are devised  to  tackle the  optimization problems with complex variables such that  they can be solved by   the standard algorithms.

The proposed methods are established  under the MSE criterion  which is suitable for   the Gaussianity assumption of noise distribution.  Because maximum correntropy criterion  (MCC)  is independent of the noise distribution, it has been used to deal with the samples corrupted by non-Gaussian noise \cite{zou2018robust}.  To  cope with  various types of noises, therefore,
there is a great need to develop smoothing and regularization based
methods for multichannel reconstruction through minimizing correntropy induced metric (CIM). This
topic will be discussed in our further studies.

{ \begin{table}
\caption{The EMSEs of the reconstruction  from the noisy samples of $f$ and $\mathcal{H}f$ by different methods. The standard deviation of the noise is $\sigma_\epsilon = 0.05$. The parameters of $l_1$ regularization and $l_2$ regularization are $\eta= 1.2, \alpha = 1$.} \vspace{0.2cm}	\scriptsize \setlength\tabcolsep{3pt}
	\begin{tabular}{c|c c c c c c c c}
		\hline
		No.  &MCI  & MCI+ & Pre-filt & Pre-filt &$l_1$-Reg  &$l_1$-Reg+  &$l_2$-Reg&$l_2$-Reg+  \\
		of ${S}$	&  & Post-filt &+MCI  & +MCI+ &  & Post-filt &  & Post-filt  \\
		&  &  &  &  Post-filt &  &  & & \\
		\hline
$12$ & $1.9775\cdot 10^{-1}$  & $1.9552\cdot 10^{-1}$&   $1.9658\cdot 10^{-1}$  & $\mathbf{1.9511\cdot 10^{-1}}$ &  $1.9613\cdot 10^{-1}$  & $1.9613\cdot 10^{-1}$  & $1.9662\cdot 10^{-1}$ &  $1.9662\cdot 10^{-1}$\\
$24$ &$ 1.1846\cdot 10^{-1}$ &  $1.1686\cdot 10^{-1}$ &  $1.1647\cdot 10^{-1}$  & $\mathbf{1.1533\cdot 10^{-1}}$ &  $1.1770\cdot 10^{-1}$  & $1.1770\cdot 10^{-1}$ &  $1.1786\cdot 10^{-1}$ &  $1.1786\cdot 10^{-1}$\\
$36$  &$3.4946\cdot 10^{-2}$  & $\mathbf{3.3460\cdot 10^{-2}}$  &$ 3.4789\cdot 10^{-2}$  & $3.3491\cdot 10^{-2}$ &  $3.4673\cdot 10^{-2}$  & $3.4673\cdot 10^{-2}$ &  $3.4600\cdot 10^{-2}$ &  $3.4600\cdot 10^{-2}$\\
$48$  &$7.9627\cdot 10^{-3}$  & $7.3021\cdot 10^{-3}$ &  $7.7302\cdot 10^{-3} $ & $\mathbf{7.2938\cdot 10^{-3}}$&  $ 7.8393\cdot 10^{-3}$   &$7.8393\cdot 10^{-3}$&   $7.7912\cdot 10^{-3} $& $ 7.7912\cdot 10^{-3}$\\
$60 $ &$3.2912\cdot 10^{-3}$  & $3.0928\cdot 10^{-3} $& $\mathbf{2.9815\cdot 10^{-3}}$  &$ 3.1310\cdot 10^{-3}$&   $3.1743\cdot 10^{-3}$  & $3.1743\cdot 10^{-3} $ & $3.1252\cdot 10^{-3}$  & $3.1252\cdot 10^{-3}$\\
$72$   &$2.6287\cdot 10^{-3} $  &$2.4180\cdot 10^{-3}$ &  $\mathbf{2.2175\cdot 10^{-3}}$ & $ 2.4221\cdot 10^{-3} $ &  $2.4276\cdot 10^{-3}$ &  $2.4276\cdot 10^{-3}$ &$  2.4091\cdot 10^{-3}$  & $2.4091\cdot 10^{-3}$\\
$84  $ &$2.5423\cdot 10^{-3}$  & $2.0685\cdot 10^{-3}$ &  $2.2424\cdot 10^{-3}$  & $2.2205\cdot 10^{-3}$ & $ 2.2442\cdot 10^{-3}$ & $ 2.0622\cdot 10^{-3}$ & $ 2.2607\cdot 10^{-3}$  & $\mathbf{2.0336\cdot 10^{-3}}$\\
$96  $  &$2.5282\cdot 10^{-3}$ & $ 1.8273\cdot 10^{-3}$ & $ 2.3179\cdot 10^{-3}$ &  $2.0671\cdot 10^{-3}$ & $ 2.1484\cdot 10^{-3}$& $  1.8150\cdot 10^{-3}$ & $ 2.2024\cdot 10^{-3}$  & $\mathbf{1.8018\cdot 10^{-3}}$\\
$108 $& $2.5250\cdot 10^{-3} $& $ 1.6405\cdot 10^{-3} $&  $2.3580\cdot 10^{-3}$ & $ 1.9529\cdot 10^{-3}$ & $ 2.0835\cdot 10^{-3}$ &  $1.6371\cdot 10^{-3}$ & $ 2.1542\cdot 10^{-3}$ & $ \mathbf{1.6231\cdot 10^{-3}}$\\
$120 $  &$2.5211\cdot 10^{-3}$&  $ 1.4869\cdot 10^{-3}$&   $2.3664\cdot 10^{-3}$ & $ 1.8617\cdot 10^{-3} $&  $2.0206\cdot 10^{-3} $&  $1.4721\cdot 10^{-3}$&  $ 2.0960\cdot 10^{-3}$&  $\mathbf{ 1.4635\cdot 10^{-3}}$\\
$168 $ &$2.5164\cdot 10^{-3} $ & $1.1042\cdot 10^{-3}$ &  $2.1058\cdot 10^{-3}$  &$ 1.4307\cdot 10^{-3}$  & $1.8323\cdot 10^{-3} $ & $\mathbf{1.0761\cdot 10^{-3}} $ & $1.9180\cdot 10^{-3}$ & $ \mathbf{1.0761\cdot 10^{-3}}$\\
$216 $ &$ 2.5129\cdot 10^{-3}$&  $ 8.8555\cdot 10^{-4}$&  $ 1.8827\cdot 10^{-3}$  &$ 1.1820\cdot 10^{-3} $&  $1.6987\cdot 10^{-3} $&  $\mathbf{8.4862\cdot 10^{-4}}$  & $1.7668\cdot 10^{-3}$ &  $8.5065\cdot 10^{-4}$\\
$264$  &$2.5098\cdot 10^{-3}$ & $ 7.3196\cdot 10^{-4}$ &  $1.7338\cdot 10^{-3} $ & $9.7474\cdot 10^{-4}$  & $1.5784\cdot 10^{-3}$&  $ 7.0312\cdot 10^{-4}$  &$ 1.6337\cdot 10^{-3} $& $ \mathbf{6.9867\cdot 10^{-4}}$\\
$312 $&$2.5089\cdot 10^{-3}$ & $ 6.3254\cdot 10^{-4}$ & $ 1.6276\cdot 10^{-3}$ & $ 8.6721\cdot 10^{-4}$  & $1.4790\cdot 10^{-3}$ &  $6.0003\cdot 10^{-4} $ & $1.5259\cdot 10^{-3}$  &$ \mathbf{5.9896\cdot 10^{-4}}$\\
$624$  &$ 2.5054\cdot 10^{-3} $&  $3.3926\cdot 10^{-4} $& $  1.3195\cdot 10^{-3}$ & $ 5.1712\cdot 10^{-4}$ & $ 1.0687\cdot 10^{-3}$&   $\mathbf{3.1818\cdot 10^{-4}}$  & $1.0924\cdot 10^{-3}$ &  $3.1880\cdot 10^{-4}$\\
$1248$& $2.5017\cdot 10^{-3}$&   $1.8626\cdot 10^{-4}$ &  $ 1.1503\cdot 10^{-3}$ & $ 3.2887\cdot 10^{-4}$ & $ 7.1142\cdot 10^{-4}$ &  $1.7610\cdot 10^{-4}$&  $ 7.4592\cdot 10^{-4}$ &  $\mathbf{1.7548\cdot 10^{-4}}$\\
		\hline
	\end{tabular}\label{table3}
\end{table}
}

{ \begin{table}
\caption{The EMSEs of  reconstruction  from the noisy samples of $f$ and $f'$ by different methods. The standard deviation of the noise is $\sigma_\epsilon = 0.1$. The parameters of $l_1$ regularization and $l_2$ regularization are $\eta= 1.2, \alpha = 1$.} \vspace{0.2cm}	\scriptsize \setlength\tabcolsep{3pt}\centering
	\begin{tabular}{c|c c c c c c c c}
\hline
No.   &MCI  & MCI+ & Pre-filt & Pre-filt &$l_1$-Reg  &$l_1$-Reg+  &$l_2$-Reg&$l_2$-Reg+  \\
of ${S}$	&  & Post-filt &+MCI  & +MCI+ &  & Post-filt &  & Post-filt  \\
		&  &  &  &  Post-filt &  &  & & \\
\hline
$12$ &$2.0447\cdot 10^{-1}$  & $1.9696\cdot 10^{-1}$ &$1.9810\cdot 10^{-1}$ &$\mathbf{1.9187\cdot 10^{-1}}$ &$2.0228\cdot 10^{-1}$  &$2.0228\cdot 10^{-1}$   &$2.0193\cdot 10^{-1}$  &$2.0193\cdot 10^{-1}$ \\
$24$ &$1.6772\cdot 10^{-1}$ & $1.6002\cdot 10^{-1}$  &  $1.6298\cdot 10^{-1}$ &$\mathbf{1.5505\cdot 10^{-1}}$  & $1.6606\cdot 10^{-1}$ & $1.6606\cdot 10^{-1}$ & $1.6646\cdot 10^{-1}$ &$1.6646\cdot 10^{-1}$\\
$36$ & $6.3027\cdot 10^{-2}$ &  $5.8539\cdot 10^{-2}$ & $5.9992\cdot 10^{-2}$ &$\mathbf{5.7333\cdot 10^{-2}}$  & $6.1093\cdot 10^{-2}$  &  $6.1093\cdot 10^{-2}$  & $6.1223\cdot 10^{-2}$&$ 6.1223\cdot 10^{-2}$ \\
$48$ & $1.7364\cdot 10^{-2}$ & $\mathbf{1.5195\cdot 10^{-2}}$ & $1.5838\cdot 10^{-2}$ & ${1.5500\cdot 10^{-2}}$ & $1.6414\cdot 10^{-2}$ &  $1.6414\cdot 10^{-2}$ & $1.6276\cdot 10^{-2}$  &$ 1.6276\cdot 10^{-2}$\\
$60$ & $8.3509\cdot 10^{-3}$ & $\mathbf{7.0806\cdot 10^{-3}}$ & $7.4106\cdot 10^{-3}$  &$7.8499\cdot 10^{-3}$  &$7.7873\cdot 10^{-3}$  &$7.7873\cdot 10^{-3}$  & $7.8924\cdot 10^{-3}$ &$7.8924\cdot 10^{-3}$ \\
$72$ & $6.9644\cdot 10^{-3}$ &$6.0299\cdot 10^{-3}$  & $\mathbf{5.9683\cdot 10^{-3}}$ & $6.8682\cdot 10^{-3}$ & $6.5291\cdot 10^{-3}$ &$6.5291\cdot 10^{-3}$  &$6.6630\cdot 10^{-3}$ & $6.6630\cdot 10^{-3}$ \\
$84$ &$6.7706\cdot 10^{-3}$  &$6.0350\cdot 10^{-3}$  & $\mathbf{5.9369\cdot 10^{-3}}$ &$7.0210\cdot 10^{-3}$  &$6.2032\cdot 10^{-3}$   & $6.2061\cdot 10^{-3}$ &$6.3276\cdot 10^{-3}$ &$ 6.3296\cdot 10^{-3}$ \\
$96$ &$6.7236\cdot 10^{-3}$  & $5.9158\cdot 10^{-3}$ & $5.9449\cdot 10^{-3}$ & $ 6.8447\cdot 10^{-3}$ &$5.8551\cdot 10^{-3}$  &$\mathbf{5.7939\cdot 10^{-3}}$  &$6.0500\cdot 10^{-3}$ &$5.9837\cdot 10^{-3}$ \\
$108$ &$6.6940\cdot 10^{-3}$  &$5.7642\cdot 10^{-3}$  & $5.9548\cdot 10^{-3}$ & $6.6695\cdot 10^{-3}$ &$5.5846\cdot 10^{-3}$   & $\mathbf{5.4172\cdot 10^{-3}}$ &$5.8015\cdot 10^{-3}$ & $ 5.6187\cdot 10^{-3}$\\
$120$ & $6.6832\cdot 10^{-3}$ &$5.6607\cdot 10^{-3}$  & $5.8638\cdot 10^{-3}$ & $6.5043\cdot 10^{-3}$ & $5.2678\cdot 10^{-3}$ &$\mathbf{4.9812\cdot 10^{-3}}$  &$5.6014\cdot 10^{-3}$ &$5.2933\cdot 10^{-3}$ \\
$168$ &$6.6281\cdot 10^{-3}$  & $4.4436\cdot 10^{-3}$ & $5.2445\cdot 10^{-3}$ &$4.7156\cdot 10^{-3}$   & $4.5830\cdot 10^{-3}$ &$\mathbf{4.0760\cdot 10^{-3}}$  &$4.9211\cdot 10^{-3}$ &$4.2980\cdot 10^{-3}$ \\
$216$ & $6.6302\cdot 10^{-3}$ & $3.7970\cdot 10^{-3}$ &$4.9550\cdot 10^{-3}$  &$3.8718\cdot 10^{-3}$  & $4.0361\cdot 10^{-3}$ &$\mathbf{3.4458\cdot 10^{-3}}$  & $4.4431\cdot 10^{-3}$&$3.6661\cdot 10^{-3}$ \\
$264$ & $6.6175\cdot 10^{-3}$ & $3.4145\cdot 10^{-3}$ &  $4.7180\cdot 10^{-3}$& $3.5193\cdot 10^{-3}$ & $3.6399\cdot 10^{-3}$ & $\mathbf{3.0036\cdot 10^{-3}}$ &$4.0285\cdot 10^{-3}$ &$3.1937\cdot 10^{-3}$ \\
$312$ & $6.6208\cdot 10^{-3}$ &  $3.0056\cdot 10^{-3}$ & $4.5341\cdot 10^{-3}$  &$3.0455\cdot 10^{-3}$  &$3.2838\cdot 10^{-3}$  & $\mathbf{2.6363\cdot 10^{-3}}$ &$3.6855\cdot 10^{-3}$&$ 2.8286\cdot 10^{-3}$ \\
$624$ & $ 6.6386\cdot 10^{-3}$ &$1.7831\cdot 10^{-3}$  & $3.8749\cdot 10^{-3}$ & $1.8214\cdot 10^{-3}$ & $2.0782\cdot 10^{-3}$ & $\mathbf{1.5614\cdot 10^{-3}}$ &$2.4086\cdot 10^{-3}$  & $1.6609\cdot 10^{-3}$\\
$1248$ & $6.6563\cdot 10^{-3}$ & $1.0228\cdot 10^{-3}$ &$3.4142\cdot 10^{-3}$  & $1.0464\cdot 10^{-3}$ &$1.2336\cdot 10^{-3}$  &$\mathbf{8.7747\cdot 10^{-4}}$  &$1.5129\cdot 10^{-3}$ &$9.5448\cdot 10^{-4}$ \\
		\hline
	\end{tabular}\label{table4}
\end{table}
}

{ \begin{table}
\caption{The EMSEs of   reconstruction  from the noisy samples of $f$ by different methods. The standard deviation of the noise is $\sigma_\epsilon = 0.05$. The parameters of $l_1$ regularization and $l_2$ regularization are $\eta= 1.2, \alpha = 1$.} \vspace{0.2cm}	\scriptsize \setlength\tabcolsep{5pt}\centering
	\begin{tabular}{c|c c c c c c}
\hline
No.   &MCI  & MCI+   &$l_1$-Reg  &$l_1$-Reg+  &$l_2$-Reg&$l_2$-Reg+  \\
of ${S}$	&  & Post-filt  &  & Post-filt &  & Post-filt  \\
\hline
$12$   &   $1.5859\cdot 10^{-1}$&   $\mathbf{1.5599\cdot 10^{-1}}$ &  $1.5764\cdot 10^{-1}$&   $1.5764\cdot 10^{-1} $&  $1.5769\cdot 10^{-1}$  &$ 1.5769\cdot 10^{-1}$\\
$24$   & $ 7.7599\cdot 10^{-2} $&  $7.6884\cdot 10^{-2}$  & $7.7133\cdot 10^{-2} $ & $7.7133\cdot 10^{-2}$  & $\mathbf{7.6421\cdot 10^{-2}}$ & $ 7.6421\cdot 10^{-2}$\\
$36$   &  $1.9784\cdot 10^{-2}$ &  $1.8561\cdot 10^{-2}$  & $1.9135\cdot 10^{-2} $ &$ 1.9135\cdot 10^{-2}$  & $\mathbf{1.7941\cdot 10^{-2}}$  & $1.7941\cdot 10^{-2}$\\
$48 $  &  $ 5.4554\cdot 10^{-3}$ &  $4.9155\cdot 10^{-3}$ &  $5.1145\cdot 10^{-3}$ &  $5.1145\cdot 10^{-3} $&  $\mathbf{4.6144\cdot 10^{-3}}$   &$4.6144\cdot 10^{-3}$\\
$60 $  &   $2.9055\cdot 10^{-3}$  & $2.6568\cdot 10^{-3} $ & $2.7004\cdot 10^{-3}$  & $2.7004\cdot 10^{-3}$ &  $\mathbf{2.5800\cdot 10^{-3}}$  &$ 2.5800\cdot 10^{-3}$\\
$72$   &   $2.5378\cdot 10^{-3} $&  $2.3512\cdot 10^{-3}$  &$ 2.3321\cdot 10^{-3}$ &  $2.3321\cdot 10^{-3}$ &  $\mathbf{2.2939\cdot 10^{-3}}$  & $2.2939\cdot 10^{-3}$\\
$84$   &   $2.4940\cdot 10^{-3}$ &  $2.0559\cdot 10^{-3}$  & $2.2026\cdot 10^{-3}$ &  $2.0182\cdot 10^{-3}$ &  $2.2221\cdot 10^{-3}$ &  $\mathbf{2.0032\cdot 10^{-3}}$\\
$96$   &   $2.4895\cdot 10^{-3}$ & $ 1.8232\cdot 10^{-3}$ &  $2.1195\cdot 10^{-3}$ & $ 1.7891\cdot 10^{-3}$ &  $2.1736\cdot 10^{-3}$ &  $\mathbf{1.7796\cdot 10^{-3}}$\\
$108$ &  $2.4909\cdot 10^{-3} $ & $1.6392\cdot 10^{-3} $ & $2.0569\cdot 10^{-3}$  &$ 1.6031\cdot 10^{-3} $ & $2.1269\cdot 10^{-3}$ &  $\mathbf{1.5973\cdot 10^{-3}}$\\
$120$ &   $2.4902\cdot 10^{-3} $&  $1.4863\cdot 10^{-3} $ & $1.9990\cdot 10^{-3}$ & $ 1.4551\cdot 10^{-3}$  & $2.0732\cdot 10^{-3}$ &  $\mathbf{1.4431\cdot 10^{-3}}$\\
$168$ &   $2.4942\cdot 10^{-3}$ &  $1.1034\cdot 10^{-3} $  &$1.8142\cdot 10^{-3}$  &$ 1.0572\cdot 10^{-3}$ &  $1.9003\cdot 10^{-3}$&  $ \mathbf{1.0528\cdot 10^{-3}}$\\
$216$ &   $2.4956\cdot 10^{-3}$ &  $8.8030\cdot 10^{-4}$   &$1.6856\cdot 10^{-3}$  & $\mathbf{8.3641\cdot 10^{-4}} $ & $1.7557\cdot 10^{-3}$  & $8.3698\cdot 10^{-4}$\\
$264$ &  $ 2.4956\cdot 10^{-3}$  & $7.3025\cdot 10^{-4}$  & $1.5669\cdot 10^{-3}$  & $\mathbf{6.8889\cdot 10^{-4}}$  & $1.6291\cdot 10^{-3}$ &  $6.9493\cdot 10^{-4}$\\
$312$ &   $2.4968\cdot 10^{-3}$  & $6.3040\cdot 10^{-4} $  &$1.4702\cdot 10^{-3}$ & $ \mathbf{5.8854\cdot 10^{-4}}$  & $1.5213\cdot 10^{-3} $ & $5.9541\cdot 10^{-4}$\\
$624$ &   $2.4994\cdot 10^{-3}$  & $3.3780\cdot 10^{-4}$   &$1.0645\cdot 10^{-3}$  & $3.1709\cdot 10^{-4}$ &  $1.0888\cdot 10^{-3}$  & $\mathbf{3.1546\cdot 10^{-4}}$\\
$1248$ &  $ 2.4988\cdot 10^{-3}$  & $1.8447\cdot 10^{-4}$  & $7.0981\cdot 10^{-4}$ & $ 1.7487\cdot 10^{-4}$  & $7.4399\cdot 10^{-4}$  & $\mathbf{1.7397\cdot 10^{-4}}$\\
	\hline
	\end{tabular}\label{table5}
\end{table}
}

{ \begin{table}
\caption{Comparison of running time (second)   on the reconstructions from the noisy samples of $f$ and $f'$ by different methods (running $10000$ times). The standard deviation of the noise is $\sigma_\epsilon = 0.1$. The parameters of $l_1$ regularization and $l_2$ regularization are $\eta= 1.2, \alpha = 1$.} \vspace{0.2cm}	\scriptsize \scriptsize \setlength\tabcolsep{5pt}\centering
\begin{tabular}{c|c c c c c c c c}
\hline
No.  &MCI  & MCI+ & Pre-filt & Pre-filt &$l_1$-Reg  &$l_1$-Reg+  &$l_2$-Reg&$l_2$-Reg+  \\
of ${S}$	&  & Post-filt &+MCI  & +MCI+ &  & Post-filt &  & Post-filt  \\
	&  &  &  &  Post-filt &  &  & & \\
\hline
$12$ & $3.6788\cdot 10^{0}$ &  $4.3653\cdot 10^{0} $& $ 6.4205\cdot 10^{0}$ &  $8.2292\cdot 10^{0}$ &  $1.1015\cdot 10^{1}$  & $1.0952\cdot 10^{1}$  &$ 3.8853\cdot 10^{0}$ &  $4.1064\cdot 10^{0}$\\
$24$  &$ 3.8765\cdot 10^{0} $&  $5.1819\cdot 10^{0}$  & $7.1325\cdot 10^{0}$  & $8.6309\cdot 10^{0}$  & $2.1278\cdot 10^{1}$ &  $2.1175\cdot 10^{1}$ &  $4.4323\cdot 10^{0} $ &$ 4.6031\cdot 10^{0}$\\
$36$ &  $3.8860\cdot 10^{0}$ &  $6.8050\cdot 10^{0}$  & $7.8395\cdot 10^{0}$  & $1.1033\cdot 10^{1}$ &  $4.0122\cdot 10^{1}$ & $ 3.9912\cdot 10^{1}$  & $4.7568\cdot 10^{0}$  & $4.8133\cdot 10^{0}$\\
$48$ & $ 4.1999\cdot 10^{0}$  & $8.4191\cdot 10^{0}$  &$ 1.0369\cdot 10^{1}$ &  $1.5357\cdot 10^{1}$ &  $7.4549\cdot 10^{1}$ & $ 7.3722\cdot 10^{1}$ &  $5.3805\cdot 10^{0}$  & $5.4777\cdot 10^{0}$\\
$60$ &  $4.1374\cdot 10^{0}$  & $1.0196\cdot 10^{1}$  & $1.2337\cdot 10^{1}$ &  $1.8147\cdot 10^{1}$  & $1.1683\cdot 10^{2}$ &  $1.1920\cdot 10^{2} $&  $6.4216\cdot 10^{0}$  & $6.0145\cdot 10^{0}$\\
$72$ &  $3.9930\cdot 10^{0}$ &  $1.2222\cdot 10^{1}$ &  $1.4093\cdot 10^{1}$ &  $2.1907\cdot 10^{1}$  & $1.6636\cdot 10^{2}$ &  $1.6958\cdot 10^{2}$ &  $7.2450\cdot 10^{0}$  & $7.5104\cdot 10^{0}$\\
$84$ &  $4.0059\cdot 10^{0}$ &  $1.4067\cdot 10^{1}$ &  $1.5117\cdot 10^{1}$  & $2.4300\cdot 10^{1}$ &  $2.1869\cdot 10^{2}$  & $2.2381\cdot 10^{2}$  & $7.7516\cdot 10^{0}$ &  $8.2238\cdot 10^{0}$\\
$96$  & $3.9740\cdot 10^{0}$ &  $1.4904\cdot 10^{1}$  & $1.9453\cdot 10^{1}$ &  $3.0636\cdot 10^{1}$ &  $2.7540\cdot 10^{2}$ &  $2.8930\cdot 10^{2}$  & $8.8913\cdot 10^{0}$  &$ 9.4150\cdot 10^{0}$\\
$108$ & $4.0361\cdot 10^{0}$ &  $1.7351\cdot 10^{1}$ &  $2.1317\cdot 10^{1}$ &  $3.4991\cdot 10^{1}$&   $3.2556\cdot 10^{2}$  & $3.3321\cdot 10^{2}$  & $1.0162\cdot 10^{1}$&   $1.0759\cdot 10^{1}$\\
$120$  & $4.0246\cdot 10^{0} $& $ 1.8552\cdot 10^{1}$ & $ 2.3392\cdot 10^{1}$ &  $3.8498\cdot 10^{1} $&  $3.7987\cdot 10^{2}$ &  $3.9022\cdot 10^{2}$ &  $1.2139\cdot 10^{1}$ &  $1.2272\cdot 10^{1}$\\
$168$ &  $4.1341\cdot 10^{0}$ &  $1.9241\cdot 10^{1}$ &  $3.6574\cdot 10^{1}$ &  $5.3410\cdot 10^{1}$&   $6.7238\cdot 10^{2}$ &  $6.9768\cdot 10^{2} $ & $1.7152\cdot 10^{1}$ & $ 1.8839\cdot 10^{1}$\\
$216$ & $4.3097\cdot 10^{0} $&  $2.0494\cdot 10^{1}$ &  $5.9186\cdot 10^{1}$ &  $7.8245\cdot 10^{1}$ &  $1.1883\cdot 10^{3}$  & $1.2520\cdot 10^{3}$  &$ 4.1561\cdot 10^{1}$ &  $4.2246\cdot 10^{1}$\\
$264 $ & $5.3899\cdot 10^{0}$  & $2.5877\cdot 10^{1}$ & $ 9.8980\cdot 10^{1}$ &  $1.1487\cdot 10^{2} $&  $1.8915\cdot 10^{3}$ &  $1.9623\cdot 10^{3}$ &  $6.6949\cdot 10^{1}$ &  $6.9506\cdot 10^{1}$\\
$312$  & $5.4321\cdot 10^{0}$ &  $2.7976\cdot 10^{1}$ &  $1.2620\cdot 10^{2}$  &$ 1.4583\cdot 10^{2} $ & $2.8349\cdot 10^{3}$ &  $2.9861\cdot 10^{3} $&  $1.1200\cdot 10^{2}$ &  $1.0305\cdot 10^{2}$\\
$624$  & $1.0396\cdot 10^{1}$ &  $4.5142\cdot 10^{1}$ &  $7.1446\cdot 10^{2}$  & $7.2646\cdot 10^{2}$ &  $1.2915\cdot 10^{4}$ &  $1.2899\cdot 10^{4}$  &$ 5.1578\cdot 10^{2}$  & $5.0876\cdot 10^{2}$\\
$1248$ &  $2.3852\cdot 10^{1}$ &  $8.2320\cdot 10^{1}$ &  $3.4522\cdot 10^{3}$ &  $3.5232\cdot 10^{3}$ &  $3.4581\cdot 10^{4}$ &  $3.4218\cdot 10^{4}$ &  $2.9432\cdot 10^{3}$  & $3.1514\cdot 10^{3}$\\
	\hline
	\end{tabular}\label{table6}
\end{table}
}

\begin{figure}
  \centering
  \includegraphics[width=15.9cm]{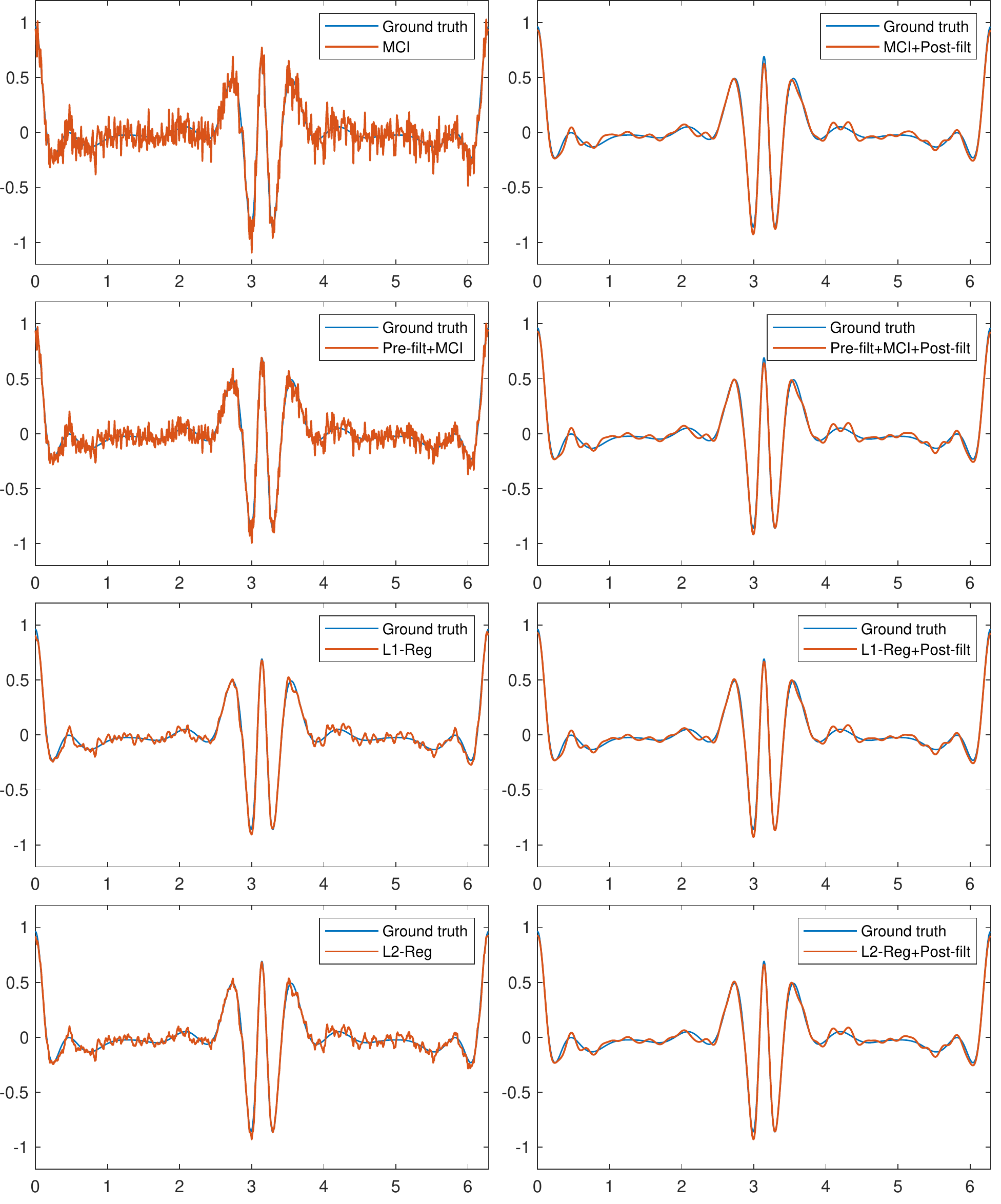}
  \caption{The reconstructed results (red lines) by different methods under the sampling  scheme of FD2 in the noisy environment with $N_s=1248, \sigma_{\epsilon}=0.1$; the blue line is the  original test function $f(t)$. The parameters of $l_1$ and $l_2$ regularization are $\eta= 1.2, \alpha = 1$.}\label{Compare_results}
\end{figure}

\begin{figure}
  \centering
  \includegraphics[width=15cm]{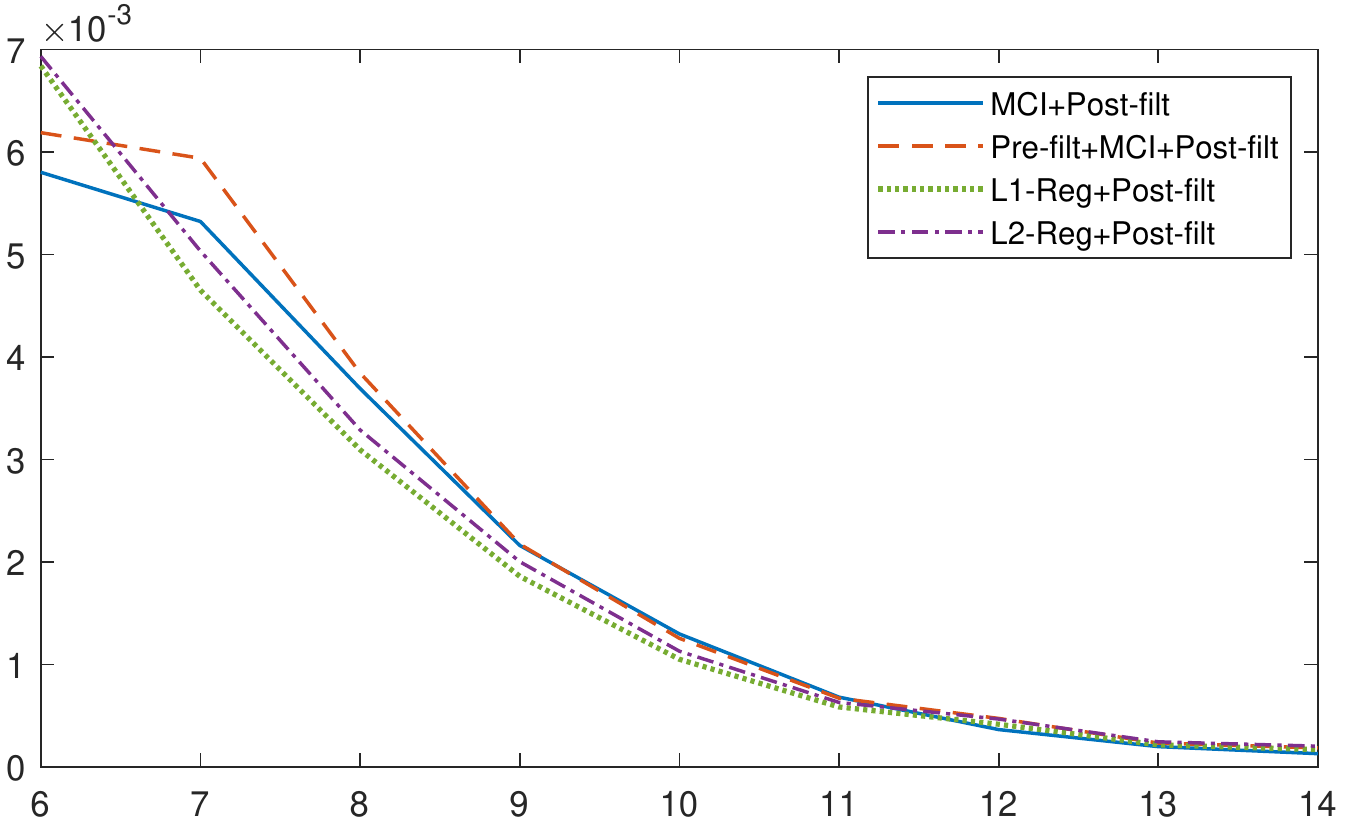}
  \caption{Experimental convergence analysis. The horizontal axis  represents   $\log_2 N_s$, where $N_s$ is the total number of samples. The vertical axis represents the error of reconstruction.}\label{Convergent_of_Post_filt}
\end{figure}

~
\newpage
~

\end{document}